\pgfplotsset{compat=1.18}
\newcommand{%
	\scalebox{}{\input{}}  
}[2]{%
	\scalebox{#1}{\input{#2}}  
}
\definecolor{DarkGreen}{rgb}{0.1,0.5,0.1}
\definecolor{DarkRed}{rgb}{0.5,0.1,0.1}
\definecolor{DarkBlue}{rgb}{0.1,0.1,0.7}
 \let\mathscr\relax
\DeclareRobustCommand{\svdots}{
  \vbox{%
    \baselineskip=0.33333\normalbaselineskip
    \lineskiplimit=0pt
    \hbox{.}\hbox{.}\hbox{.}%
    \kern-0.2\baselineskip
  }%
}
\newcommand{\problem}{\textsc{proxy selection}}
\newcommand{\win}{\textnormal{\texttt{win}}}
\newcommand{\opt}{\textnormal{\texttt{opt}}}
\newcommand{\dRepallzero}{\textnormal{\texttt{dRep$\bar{0}$}}}
\newcommand{\dRepzero}{\textnormal{\texttt{dRep0}}}
\newcommand{\dRepone}{\textnormal{\texttt{dRep1}}}
\newcommand{\vecc}[1]{\ensuremath{\mathbf{#1}}}
\newcommand{\I}{C}
\renewcommand{\tilde}{\widetilde}
\def\epsilon{\varepsilon}
\renewcommand{\hat}{\widehat}
\renewcommand{\bar}{\overline}
\DeclareMathOperator*{\argmax}{\mathrm{arg\,max}}
\DeclarePairedDelimiter\ceil{\lceil}{\rceil}
\DeclarePairedDelimiter\floor{\lfloor}{\rfloor}
\newtheoremstyle{plain2}
{6pt}
{6pt}
{}
{}
{\bfseries}
{.}
{.5em}
{}
\theoremstyle{plain2}
\newtheorem{theorem}{Theorem}
\newtheorem{proposition}[theorem]{Proposition}
\newtheorem{lemma}{Lemma}
\newtheorem{claim}{Claim}
\newtheorem{example}{Example}%
\newtheorem{definition}{Definition}%
\begin{document}

\title[On the Potential and Limitations of Proxy Voting:\\
Delegation with Incomplete Votes]{On the Potential and Limitations of Proxy Voting:\\
Delegation with Incomplete Votes}

\author[1,5]{Georgios Amanatidis}
\author[2]{Aris Filos-Ratsikas}
\author[3]{Philip Lazos}
\author[3,4,5]{\\Evangelos Markakis}
\author[4,6]{Georgios Papasotiropoulos}

\affil[1]{University of Essex}
\affil[2]{University of Edinburgh}
\affil[3]{Input Output (IOG)}
\affil[4]{Athens University of Economics and Business}
\affil[5]{Archimedes\,/\,Athena Research Center}
\affil[6]{University of Warsaw}

\abstract{We study elections where 
voters are faced with the challenge of expressing preferences over an extreme number of issues under consideration. 
This is largely motivated by emerging blockchain governance systems, which include voters with different weights and a massive number of community generated proposals. In such scenarios, it is natural to expect that voters will have incomplete preferences, as they may only be able to evaluate or be confident about a very small proportion of the alternatives. As a result, the election outcome may be significantly affected, leading to suboptimal decisions. 
Our central inquiry revolves around whether delegation of ballots to proxies possessing greater expertise or a more comprehensive understanding of the voters' preferences can lead to outcomes with higher legitimacy and enhanced voters' satisfaction in elections where voters submit incomplete preferences. To explore its aspects, we introduce the following model: potential proxies advertise their ballots over multiple issues, and each voter either delegates to a seemingly attractive proxy or casts a ballot directly.
We identify necessary and sufficient conditions that could lead to a socially better outcome by leveraging the participation of proxies. 
We accompany our theoretical findings with experiments on instances derived from real datasets.
Overall, our results enhance the understanding of the power of delegation towards improving election outcomes.
}

\keywords{Computational Social Choice; Proxy Voting; Incomplete Preferences}

\maketitle

\section{Introduction}
\label{sec:intro}
Broadly speaking, an election refers to a voting system in which a set of participants express their preferences over a set of possible issues or outcomes, and those are aggregated into a collective decision, typically with a socially desirable objective in mind. Besides their ``traditional'' applications such as parliamentary elections or referenda, elections often underpin the livelihood of modern systems such as blockchain governance \citep{kiayias2022sok,cevallos2021verifiably} or participatory budgeting \citep{cabannes2004participatory,aziz2021participatory}. Quite often, voters are called to vote on an extremely high number of issues, rendering the accurate expression of their preferences extremely challenging.   
For instance, the Cardano blockchain uses Project Catalyst (\url{https://projectcatalyst.io}) to allocate treasury funds to community projects, and routinely receives several thousands of proposals per funding round.
Another application comes from platforms of civic participation, where the users express support on opinions or proposals \citep{halpern}.

An unfortunate consequence of these election scenarios is that the voters will inevitably have a confident opinion only for a small number of issues (henceforth \emph{proposals}), as investing enough time and effort to inform themselves about the sheer volume of proposals is clearly prohibitive. In turn, the ``direct voting'' outcome, even under the best intentions, will most likely be ineffective in capturing the desires of society, which it would, had the voters been sufficiently informed. 

A well-documented possible remedy to this situation is to allow for \emph{proxy voting} \citep{CMMMO17}, a system in which the voters \emph{delegate} their votes to \emph{proxies}. The idea is that those proxies have the time and resources to study the different proposals carefully, and vote on behalf of the voters they represent. This in fact captures voting applications more broadly, where the reason for delegation might be a reluctance to express an opinion, lack of specialized knowledge, or even limited interest. When those proxies are part of an electorate together with other voters and proxies, the resulting system is known as \emph{liquid democracy} \citep{kahng2021liquid,caragiannis2019contribution}. Liquid democracy has been scrutinized, with arguments presented in its favor \citep{becker,halpern2021defense,epistemic} and against it \citep{caragiannis2019contribution,kahng2021liquid}, and at the same time it is being employed in real-world situations \citep{paulin} including settings similar to the one studied here, like the Project Catalyst.

A takeaway message from the ongoing debate around delegative voting is that such processes might indeed be useful under the right circumstances. Extending this line of thought and motivated by the scenarios presented above, we aim to identify the \emph{potential} and \emph{limitations} of proxy voting with regard to achieving socially desirable outcomes, in settings with incomplete votes. 
More precisely, we aim to characterize what is theoretically possible with delegation, and what is impossible, even under idealized conditions.

\subsection{Our Setting and Contribution}\label{sec:our-results}

We focus on elections in which the aim is to choose one proposal to be implemented from a range of multiple proposals. We introduce a model of proxy voting, where voters have intrinsic approval preferences over all proposals, which are only partially revealed or known to the voters themselves. A set of \emph{delegation representatives (dReps)} can then advertise ballots over the proposals and the voters in turn may either delegate to a proxy, if there is sufficient agreement (i.e., over a certain \emph{agreement threshold} between the proxy's advertised ballot and the voter's revealed preferences), or vote directly. The outcome of the election is the proposal with the largest approval score, assembled by the score from the ballots of the dReps (representing voters who delegated their vote) and the voters that vote directly. The core question we pose follows:
\medskip

\noindent \emph{``Is it possible for the dReps to advertise their preferences appropriately such that the outcome of the election has an approval score that is a good approximation of the best possible approval score; which would only be achievable if all voters had full knowledge of their preferences?''} 

\subsubsection*{``Best-Case Scenario''.}

We study the aforementioned question by making the following assumptions:
\begin{enumerate}[leftmargin=*,topsep=2pt,parsep=2pt]
\item The dReps are fully informed about the preferences of the voters, i.e., they know exactly the vector of intrinsic preferences for each voter.
\item The dReps themselves do not have actual preferences and their only goal is to achieve the best possible approximation of the optimal approval score. To do so they coordinate with each other and advertise their types accordingly.
\item When there are multiple proposals with the maximum revealed score, ties are broken in favor of those with the highest intrinsic scores. 
\end{enumerate}
One should not of course expect all of these assumptions to apply in practice. We would expect the dReps to be only partially informed about the preferences of the voters (e.g., via some probabilistic model) and to exhibit some sort of rational behavior (e.g., needing to be appropriately incentivized to advertise ballots that are aligned with socially-desirable outcomes). Still, studying ``best-case scenario'' is already instructive for results in all other regimes. In particular:

\begin{itemize}[leftmargin=20pt]
\item Our \emph{negative results (inapproximability bounds)} immediately carry over to other settings as well, regardless of the choice of the dRep information model, the rationality model for the dReps, or the choice of the tie-breaking rule. In other words, we show that certain objectives are \emph{impossible}, even when a set of fully-informed dReps coordinate to achieve the best outcome, hence they are certainly impossible for any other meaningful setting.
\item Our \emph{positive results (approximation guarantees)} establish the limits of the aforementioned impossibilities: if something is not deemed impossible by our bounds, it should be the starting point of investigations for an information/rationality model chosen for the application at hand. Clearly, if our upper bounds establish that a certain number of dReps suffices to achieve a certain approximation in the ``best-case scenario'' setting, one should expect a slightly larger number of dReps to be needed in practice.
\end{itemize}

\subsubsection{Our Results.}
We firstly present a strong impossibility, namely that for any agreement threshold higher than 50\%, the best achievable approximation ratio is linear in the number of voters. On the positive side, we show that for an appropriate \emph{coherence} notion of the instance, which captures the commonalities of the set of proposals that sets of voters are informed about, meaningful approximations are possible. For the natural case of an agreement threshold of $50\%$, we show that a single dRep is capable of achieving an approximation factor of $3$, whereas only $2$ dReps are sufficient to elect the optimal proposal. Most significantly, we present general theoretical upper and lower bounds on the achievable approximations, depending on the agreement threshold, the number of dReps, and the coherence of the instance. We complement our theoretical results with a set of experiments using the MovieLens dataset \citep{harper2015movielens}, to measure the effects of proxy voting on realistic incomplete preferences.

\subsection{Related Work}

We first comment on some works that are closer in spirit to ours. \citet{reshef} propose a model with a similar objective, but focusing on the analysis of {\it sortition}, i.e., the approximation of the welfare achieved by selecting a random small-size committee. In a related direction, \citet{CMMMO17} analyze particular delegation mechanisms, under elections with samples of voters located randomly in a metric space, according to some distribution. Our approach does not consider any randomization, neither for the voting rule nor for the preferences. 
Finally, \citet{pivato2020weighted} also consider the performance of proxy voting, focusing on understanding when the proxy-elected outcome coincides with the outcome of direct voting. Again, the model of \citet{pivato2020weighted} is randomized, where the voters delegate based on the probability of agreement to a proxy, and not based on a deterministic distance function. Moreover, no analysis of approximation guarantees is undertaken in the work of \citet{pivato2020weighted}.
Our work can be seen as one that contributes to the corpus of findings in favor of proxy voting \citep{becker,halpern2021defense,epistemic}, albeit in a markedly different manner.

There is significant work within the field of computational social choice on elections with incomplete votes. One stream has focused on the identification of possible and necessary winners by exploring potential completions of incomplete profiles; see the work of \citet{lang} for an overview. Recent work has concentrated on the computational complexity of winner determination under various voting rules within the framework of incomplete information \citep{Imber, zhou, baumeister}. Another direction has studied the complexity of centralized interventions to reduce uncertainty \citep{A-HB+22} (e.g., by educating a selected set of voters or computing delegations via a centralized algorithm). 
Furthermore, there has been an exploration of the effect of minimizing the amount of information communicated \citep{kalech, ayadi} as well as of the interplay between voters' limited energy and social welfare \citep{terzop2}.  Considerable attention has been devoted to the exploration of efficient extensions of incomplete profiles to complete ones that satisfy desirable properties \citep{terzop1, lackner, elkind}. A conceptually related area focuses on distortion in voting, investigating the implications of applying rules to preferences that are less refined than the voters' intrinsic preferences \citep{anshelevich2021distortion}.
Beyond voting scenarios, similar solution concepts have been explored in domains such as fair division \citep{bouveret}, hedonic and non-cooperative games \citep{kerkmann,brill}.

\section{Election Framework and Definitions}\label{sec:prelims}
In the current section we formally describe the main attributes of the election setting we study. \medskip 

 \noindent\textbf{Proposals and Voters.}
Let $C = \{1, 2, \ldots, m\}$ be a set of candidate proposals, where for each proposal there are exactly two options: to be elected or not. Let also $N= \{1,2,\ldots,n\}$ be a set of voters responsible for determining the elected proposal. Each voter $i\in N$ is associated with \emph{approval preferences} $v_i \in \{0,1\}^m$ over the set of proposals; we refer those as true or \emph{intrinsic} preferences. Here, $1$ and $0$ are interpreted as ``accept'' (or ``support'') and ``reject'' (or ``oppose'') a proposal, respectively.  

Crucial to our model is the fact that voters do not actually know their entire intrinsic preference vector, but only a subset of it; this could be due to the fact that they have put additional effort into researching only certain proposals to verify if they indeed support them or not, but not necessarily all of them. Formally, we will say that each voter $i$ has \emph{revealed} preferences $\hat{v}_i \in \{0, 1, \bot\}^m$, where $\bot$ denotes that the voter does not have an opinion on the corresponding proposal. 
As such, we have the following relations between $v_i$ and $\hat{v}_i$:
\begin{linenomath*}
\[\forall j\in \I \;:\; \left(\hat{v}_i(j) = v_i(j)\right) \vee \left(\hat{v}_i(j) = \bot \right).\]
\end{linenomath*} 
The collection of proposals for which a voter $i$ has developed an opinion is referred to as their \emph{revealed set}, denoted by
$R_i:= \{j \in \I: \hat{v}_i(j)=v_i(j)\}$. 
Let $m_i  := |R_i|$. Each voter $i$ also has an integer weight $w_i$; in our work, it is without loss of generality to assume that $w_i =1$ for all  $i \in N$, as we can simply make $w_i$ copies of voter $i$ (with the same preferences), and all of our results go through verbatim. \medskip

 \noindent\textbf{dReps.}
A \emph{delegation representative (dRep)} is a ``special'' voter whose aim is to attract as many voters as possible to delegate their votes to her, and then participates in the election with the combined weight of those voters.
In contrast with some of the literature, and consistently with the ``best-case scenario'' motivation (see \Cref{sec:our-results}), we view dReps as unweighted agents, devoid of personal preferences over the proposals, with the responsibility of facilitating the election of a proposal that attains substantial support from the voters.

For any proposal $j\in \I$, every dRep has an \emph{advertised}, ``intended'' vote (or type), $t(j) \in \{0, 1\}$, which is visible by the voters. We assume here that dReps present votes for all proposals.\footnote{\ We could also allow dReps to abstain in some proposals, and this would not make any difference in our setting.} We will sometimes abuse the notation and refer by $t$ both to the type vector of a dRep as well as to the dRep itself.
The distance between a voter $i$ and a dRep of type $t$ is calculated using the Hamming distance function and is dependent on the revealed preferences of $i$ and the advertised type of $t$ in proposals that are revealed to $i$. Formally, let $t_{|i}$ be the projection of the type $t$ to the proposals that belong to $R_i$. Then we define the distance between a voter $i$ and a dRep $t$ as $d(i,t):=H(\hat{v}_i,t_{|i})$.\medskip

 \noindent\textbf{Agreement Threshold.}
For a voter to delegate their vote, they have to agree with the dRep in a certain number of proposals. This is captured by a threshold bound, any agreement above which results in delegation. To make this formal, we will assume that voter $i$ delegates their vote to a dRep when their distance to the dRep's type, taking into account only the voter's revealed preferences, is at most $\floor{\frac{m_i-k_i}{2}}$, where $k_i$ is a parameter quantifying the reluctance of voter $i$ to entrust their voting power to a proxy.  
Obviously $k_i\leq m_i$, for every $i$, and we will mainly focus on scenarios in which all voters have the same parameter, thus $k=k_i$, for every voter $i$. 
For example, when $k=0$, a voter delegates their vote if they agree with the dRep in at least half of the proposals in their revealed set; we will refer to this case as \emph{majority agreement} (see also \citet{fritsch2022price,constantinescu} for a use of a very similar threshold in a difference context).  
Given a dRep of type $t$, we say that $t$ \emph{attracts} a set of voters $A(t):=\{i \in N: d(i,t)\leq \floor{\frac{m_i-k_i}{2}}\}$. Additionally we define $A(D)$, for a set of dReps $D$ as 
\begin{linenomath*}
\[A(D):=\{i \in N: \exists \ t\in D \text{ s.t. } d(i,t)\leq \lfloor{\frac{m_i-k_i}{2}\rfloor}\}.\] 
\end{linenomath*}
If a voter is attracted by multiple dReps, we assume they
delegate to any of them arbitrarily; this choice makes our positive results stronger, whereas, notably, our negative results work for any choice (e.g., even for the more intuitive choice of the closest, in terms of Hamming distance, accepted dRep). \medskip  

 \noindent\textbf{Preference Profiles.}
Let $V=(v_i)_{i\in N}$ and $\hat{V}=(\hat{v}_i)_{i\in N}$. We call \emph{intrinsic preference profile} $P=(N,\I,V)$ a voting profile that contains the intrinsic preferences of the voters in $N$ on proposals from $\I$. Similarly, we call \emph{revealed preference profile} $\hat{P}=(N,\I,\hat{V})$ the voting profile that contains their revealed preferences.
Finally, $\hat{P}_D=(N,\I,\hat{V}\cup D)$ refers to the preference profile on proposals from $\I$, that contains the revealed preferences of the voters in $N$ as well as the advertised types of the dReps in $D$. \medskip

\noindent\textbf{Approval Voting Winners.}
The winner of the election is the proposal with the highest (weighted) approval score. Formally, let $sc(j)$ denote the score of a proposal $j\in \I$ in the profile $P$, i.e., the total weight of the voters $i \in N$ such that $v_i(j)=1$. A proposal $j\in \I$ is the winning proposal in the profile $P$ if $sc(j)\geq sc(j'), \forall j'\in \I$.
Similarly, we define $\hat{sc}(j)$ and $\hat{sc}_{D}(j)$ to be the score of a proposal $j\in \I$ in the profile $\hat{P}$ and $\hat{P}_D$, respectively. Note that $\hat{sc}(j)$ represents the score that proposal $j$ would attain if all voters were to vote directly and $\hat{sc}_{D}(j)$ comprises the scores of the dReps (whose weight is the total weight of the voters they have attracted) and the scores of the voters that have not delegated their votes to any dRep, i.e., that are voting directly.  
Let $\win(P):=\argmax\{sc(j),j\in \I\}$ and $\opt(P):=sc(\win(P))$. The same notions can be extended to profiles $\hat{P}$ and $\hat{P}_D$. \medskip

\noindent \emph{Tie-breaking for the winner:} We assume that $\argmax\{\cdot\}$ returns a single winning proposal rather than a winning set, according to some tie-breaking rule. Consistently with our discussion on the ``best-case scenario'' (see \Cref{sec:our-results}), we assume that the tie-breaking is always in favor of the proposal with the maximum intrinsic score. 
\medskip

\noindent Our goal is to select a set of $\lambda$ dReps that will collectively (by participating in the election and representing voters according to the submitted thresholds) ensure that a proposal of high intrinsic approval score will be elected, as formally presented in the definition of the \problem\ problem, that follows.

\begin{table}[ht]
	\centering
	\begin{tabular}{p{12cm}}  
		\toprule
\multicolumn{1}{c}{
\normalsize{\problem$(P,\hat{P},k, \lambda)$}}\\
		\midrule
\normalsize{\textbf{Input:} An intrinsic voting profile $P$ and a revealed voting profile $\hat{P}$ on
a set $\I$ of $m$ proposals and a set $N$ of $n$ weighted voters; the voters' true (resp.~revealed) preferences $V$ (resp.~$\hat{V}$); 
a parameter $k$, so that a voter $i$ is attracted by a dRep with type $t$ if $d(i,t) \le \floor{\frac{m_i-k}{2}}$; an upper bound $\lambda$ on the number of dReps.} \vspace{4pt}\\

\normalsize{\textbf{Output:} Specify type vectors for all dReps in $D$, with $|D|\leq \lambda$, 
such that $\win(P)=\win(\hat{P}_D)$.}
\\
		\bottomrule
	\end{tabular}
\end{table}

The performance of a suggested set of dReps is measured in terms of how well the intrinsic score of the winning proposal under their presence approximates the highest intrinsic approval score. Formally:

\begin{definition}
\label{def:apx}
    Let $\rho \ge 1$. We say that a set of dReps $D$ achieves a $\rho$-approximation if $sc(\win(\hat{P}_D))\geq \frac{1}{\rho}\, sc(\win(P)).$
\end{definition}

One might be inclined to believe that attracting a sufficiently large set of voters is enough to achieve a significantly low approximation ratio guarantee.
The following proposition establishes that this is not the case. It demonstrates that the attraction part is only one component towards solving \problem, and, therefore, achieving good approximations requires further insights.

\begin{restatable}{proposition}{attractionnotenough}
\label{prop:attraction-not-enough}
It is possible for a single dRep to attract $n-1$ voters and still achieve only an $n$-approximation. 
\end{restatable}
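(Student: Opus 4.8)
The plan is to prove the statement by an explicit construction: I will exhibit a single instance on $n$ voters, together with one dRep whose type $t$ provably attracts exactly $n-1$ of them, yet forces the election of a proposal whose intrinsic score is a factor $n$ below the optimum. I work under majority agreement ($k=0$), so the attraction threshold of voter $i$ is $\floor{m_i/2}$; larger $k$ is handled at the end by padding each delegator's revealed set with ``filler'' proposals that the dRep matches exactly.

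The instance uses three proposals $a,b,c$. Every voter approves $a$ intrinsically, so $sc(a)=n=\opt(P)$ and $\win(P)=a$, while only the last voter approves $b$ and $c$, giving $sc(b)=sc(c)=1$. The key idea is to \emph{decouple attraction from the quality of the delegated vote} by keeping the good proposal $a$ unrevealed among the delegators: I let voters $1,\dots,n-1$ reveal nothing at all ($R_i=\emptyset$, $m_i=0$), and let voter $n$ reveal only its approval of $c$. The advertised dRep type is then $t(a)=0$, $t(b)=1$, $t(c)=0$ -- it declines the optimum and endorses a low-quality proposal.

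The first step is the attraction count. A voter with an empty revealed set has $d(i,t)=0\le\floor{0/2}=0$ for any type, so voters $1,\dots,n-1$ are all attracted and contribute weight $n-1$ to the dRep. Voter $n$, having revealed approval of $c$ while $t(c)=0$, is at Hamming distance $1>\floor{1/2}=0$ and is therefore excluded, so $|A(t)|=n-1$ exactly. The second step is to evaluate $\hat{P}_D$: the dRep casts its full weight $n-1$ on $b$ and nothing on $a$ or $c$, while the single direct voter $n$ adds $1$ to $c$. Hence $\hat{sc}_D(b)=n-1$, $\hat{sc}_D(a)=0$, and $\hat{sc}_D(c)=1$, so $\win(\hat{P}_D)=b$ for $n\ge 3$ (for $n=2$ the top tie is between $b$ and $c$, both of intrinsic score $1$, yielding the same bound). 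Since $sc(b)=1$ and $\opt(P)=n$, we get $sc(\win(\hat{P}_D))=\tfrac1n\,\opt(P)$, i.e. exactly an $n$-approximation and no better.

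The step I expect to be the real obstacle is conceptual rather than computational: one must resist trying to make \emph{every} high-attraction dRep bad. Because revealed preferences are consistency-constrained -- a voter may hide, but never flip, its approval of $a$ -- nothing stops a dRep from setting $t(a)=1$ as a ``free'' bit that never lowers attraction, after which $a$ ties at the top and wins by the intrinsic-score tie-break; a smarter dRep would thus attain a $1$-approximation on the very same instance. The content of the proposition is therefore the \emph{existence} of a dRep that attracts almost everyone yet elects poorly, and the construction must be arranged so that the optimum is reachable but simply not endorsed. The two delicate points are (i) leaving $a$ entirely unrevealed among the delegators, so that it is never forced onto the dRep's ballot through the voters it must attract, and (ii) pushing precisely one voter just over the threshold (via the throwaway proposal $c$) so that attraction lands at $n-1$ rather than $n$.
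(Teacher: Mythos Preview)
Your construction is correct and establishes the proposition as stated. The route, however, differs from the paper's in two respects worth noting.

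First, your delegators have empty revealed sets, so attraction is vacuous: any type whatsoever attracts them. The paper instead gives every voter three revealed proposals and uses the all-ones dRep; voters $2,\dots,n$ genuinely agree with this dRep on two of their three revealed proposals, while voter~$1$ disagrees on two and is excluded. Second, and more substantively, your dRep is overtly adversarial---it sets $t(a)=0$ and deliberately withholds its weight from the optimum. In the paper's instance the dRep \emph{does} vote for the optimum $I_1$ (along with everything else), and the failure arises because the one excluded voter casts a direct vote for $I_2$, pushing $\hat{sc}_D(I_2)=n$ strictly above $\hat{sc}_D(I_1)=n-1$.

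Both constructions prove the existence claim, but they illustrate different failure modes. Yours shows that a dRep who attracts nearly everyone can still elect badly if it simply refuses to endorse the optimum; the paper's shows the subtler point that even a dRep that approves the optimum and attracts $n-1$ voters through genuine agreement can be undercut by the residual direct vote. Your final paragraph already identifies this distinction clearly, so the conceptual content is not lost---but the paper's example arguably better supports the surrounding narrative that attraction alone, even when ``earned'', is insufficient.
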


\begin{proof}

Consider an instance with $n$ voters and four proposals. 
Voters' intrinsic preferences are presented in the table below, where the revealed preferences are given in white background.

\begin{center}
\resizebox{0.36\columnwidth}{!}{
\begin{tabular}{@{}ccccc@{}}
\toprule
                                    & \textbf{$I_1$}            & \textbf{$I_2$} & \textbf{$I_3$} & \textbf{$I_4$}         \\ \midrule
\multicolumn{1}{|c}{\textbf{voter $1$}} & 
\cellcolor[HTML]{C0C0C0}1 & 
$1$              & $0$              & \multicolumn{1}{c|}{0} \\ \midrule
\multicolumn{1}{|c}{\textbf{voters $2,3,\dots,n$}} & \cellcolor[HTML]{C0C0C0}1 & $0$              & $1$              & \multicolumn{1}{c|}{1} \\ \bottomrule
\end{tabular}}
\end{center}

It is evident that $\win(P) = I_1$, and $\opt(P)=n$. At the same time the dRep that votes in favor of all proposals attracts $n-1$ out of the $n$ voters, namely all voters but voter $1$. 
This means that $\hat{sc}_D(I_1)=|A(D)|=n-1$ (where, recall that $D$ is the set of dReps, here consisting of the single aforementioned dRep), while $\hat{sc}_D(I_2)=n$, 
since both the dRep and voter $1$ vote in favor of $I_2$.
However, $sc(I_2)=1=\frac{1}{n}\opt(P)$, and consequently, the dRep that votes in favor of all proposals, attracting $n-1$ of the voters, cannot yield an approximation factor better than $n$.
\end{proof}

Before we proceed, we highlight that if the tie-breaking rule is not in favor of
electing the optimal proposal, then the situation can be much worse; in fact, the
approximation can be infinite as demonstrated by the following example.

\begin{example}
\emph{Consider an instance with $n$ voters and $m$ proposals. Voters' intrinsic preferences are presented in the table below, where the revealed preferences are given in white background.}

 \begin{center}
\begin{tabular}{@{}cccccc@{}}
\toprule
                                        & \textbf{$I_1$}                            & \textbf{$I_2$}    & \textbf{$I_3$}    & \textbf{$\cdots$} & \textbf{$I_m$}                         \\ \midrule
\multicolumn{1}{|c}{\textbf{\emph{voter $1$}}} & \cellcolor[HTML]{C0C0C0} \emph{$1$}                 & $0$                 & \emph{$1$}                 & \textbf{$\cdots$} & \multicolumn{1}{c|}{\emph{$1$}}                 \\ \midrule
\multicolumn{1}{|c}{{$\svdots$}}  & \cellcolor[HTML]{C0C0C0}{{$\svdots$}} & {$\svdots$} & {$\svdots$} &                   & \multicolumn{1}{c|}{{$\svdots$}} \\ \midrule
\multicolumn{1}{|c}{\textbf{\emph{voter $n$}}} & \cellcolor[HTML]{C0C0C0} \emph{$1$}                & $0$                 & \emph{$1$}                 & \textbf{$\cdots$} & \multicolumn{1}{c|}{\emph{$1$}}                 \\ \bottomrule
\end{tabular}
    \end{center}
\vspace{0.5cm}
\noindent \emph{It holds that
$\win(P) = I_1$, and $\opt(P)=n$. Suppose that we select as dRep the one that advertises $1$ for every proposal under consideration. Then, $A(D)=N$. However, the presence of this dRep in $D$ implies that $\hat{sc}_D(j)=n,$ for every proposal $j$. Breaking ties in an adversarial manner, results to the election of $I_2$, albeit $sc(I_2)=0$, which leads to an infinite approximation factor.}
\end{example}

We conclude the section with the definition of an important notion for our work, that of a \emph{coherent set of voters}, i.e., sets of voters with the same revealed sets. Several of our positive results will be parameterized by properties of those sets, such as the size of the largest coherent set.  

\begin{definition}
A set of voters $N'\subseteq N$ is called coherent if  
$R_i=R_{i'}, \forall i,i' \in N'$. An instance of \problem{} is called \textit{coherent} if $N$ is coherent. 
\end{definition}

Importantly, given an instance of \problem, it is computationally easy
to verify if it is coherent, or to find the largest coherent set of voters, via a simple algorithm.

\begin{lemma}
\label{alg_for_coherent}
    Given an instance of \problem, we can in polynomial time identify the largest coherent set and its size. Furthermore, we can in polynomial time decide if the instance is coherent. 
\end{lemma}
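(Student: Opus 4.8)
The plan is to give a direct, constructive algorithm that groups voters by their revealed sets and returns the largest group. The key observation is that coherence is defined purely in terms of the revealed sets $R_i$, so the problem reduces to a standard ``group identical objects and find the largest group'' task, where the objects are the revealed sets themselves.

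First I would compute, for every voter $i \in N$, a canonical encoding of their revealed set $R_i \subseteq \I$. Since $|\I| = m$, each $R_i$ can be represented as a length-$m$ bit-string (its indicator vector), computable in $O(m)$ time by scanning $\hat{v}_i$ and recording which coordinates are not $\bot$. This costs $O(nm)$ time in total. Crucially, $R_i = R_{i'}$ if and only if their encodings are identical, so the semantic equality of revealed sets is faithfully captured by equality of strings.

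Next I would partition the voters into equivalence classes under the relation $R_i = R_{i'}$. This can be done by sorting the $n$ encodings (each of length $m$) lexicographically, in $O(nm \log n)$ time, after which identical encodings are contiguous and each maximal block of equal encodings is exactly one coherent class; alternatively, one can insert the encodings into a hash table or trie keyed on the bit-string. Either way, a single linear pass then identifies the block of maximum cardinality, yielding both the largest coherent set $N'$ and its size $\max_{S} |\{i : R_i = S\}|$. To decide whether the whole instance is coherent, it suffices to check whether this maximum size equals $n$, or equivalently whether all encodings coincide, which is a trivial $O(nm)$ scan. Every step is polynomial in the input size $n$ and $m$, which establishes the claim.

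I do not anticipate a genuine obstacle here, as the lemma is essentially a bookkeeping statement; the only care needed is to argue that equality of the chosen encodings is both necessary and sufficient for $R_i = R_{i'}$ (so that no two distinct revealed sets are conflated and no two equal ones are separated), and to confirm that the sorting or hashing step is polynomial. The mild subtlety, if any, is simply to state the running time cleanly and to note that the indicator-vector representation is already available from $\hat{V}$, so no nontrivial preprocessing is required.
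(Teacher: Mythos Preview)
Your proposal is correct and follows essentially the same approach as the paper: both group voters by equality of their revealed sets $R_i$ and return the largest group, noting that coherence of the whole instance amounts to this group having size $n$. The only difference is cosmetic efficiency---the paper does the grouping by brute-force pairwise comparison in $O(n^2 m)$ time, whereas you sort or hash the indicator vectors for $O(nm\log n)$---but since the lemma only asserts polynomial time, this refinement is immaterial.
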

    \begin{proof}
The first part of the statement obviously implies the second, as it simply evolves checking whether the size of the largest coherent set is $n$. 
        Now, given the intrinsic preferences $P$ of an instance of \problem\ we can find the largest coherent set as follows. For every voter $i$, find a set of voters that can form a coherent set together with $i$, namely $\{i'\in N: R_i\equiv R_{i'}\}$. Among the sets obtained this way, select one of maximal cardinality. The described procedure runs in time $O(n^2m).$
    \end{proof}

\section{Theoretical Findings}\label{sec:theory}

We start our investigation with the case of a single dRep ($\lambda=1$). Our main result here is rather negative, namely that no matter how the dRep chooses their vote, the approximation ratio cannot be better than linear in the number of voters.

\begin{restatable}{theorem}{hardnocoherent}
\label{2prop:hard_no_coherent_noproof}
For a single dRep and any $k>0$, the approximation ratio of \problem\ is $\Omega(n)$.
\end{restatable}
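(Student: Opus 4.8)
The plan is to exhibit, for any fixed $k>0$, a family of instances on $n$ voters where the optimal intrinsic score grows linearly in $n$, yet every single dRep either fails to attract the voters who support the optimal proposal or is forced to simultaneously boost a competing, intrinsically-weak proposal. The key tension I would exploit is that a single type vector $t$ must serve all voters at once: to attract a voter $i$ the dRep must agree with $\hat{v}_i$ on all but at most $\floor{(m_i-k)/2}$ of the revealed coordinates, and since $k>0$ this threshold is strictly below the majority-agreement bound. That strictness is exactly the lever — it lets me engineer small revealed sets where the attraction condition becomes very rigid.

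First I would design the revealed sets so that the voters are partitioned (or nearly so) into groups that disagree on the few proposals they have revealed, in such a way that no single $t$ can clear the (now stricter, because $k>0$) agreement threshold for the large group that supports the intended optimal proposal without also agreeing with a dummy group on some ``decoy'' proposal. The construction should ensure $\opt(P)=\Theta(n)$ by having $\Theta(n)$ voters share a true approval on the target proposal $I_1$, while keeping their revealed sets small enough (say of constant size comparable to $k$) that $\floor{(m_i-k)/2}$ leaves essentially no room for error. Then I would argue that for any type vector the dRep advertises, one of two bad events occurs: either the dRep fails to attract the $\Theta(n)$ target-supporting voters (so their delegated weight never lands on $I_1$, and $\hat{sc}_D(I_1)$ stays small), or, if it does attract them, the same coordinates of $t$ force a decoy proposal to accumulate score at least matching $I_1$, so that tie-breaking or a strict lead hands the election to a proposal of intrinsic score $O(1)$.

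The crux of the argument is a case analysis over the advertised type $t$ restricted to the handful of revealed coordinates: because the revealed sets are tiny and $k>0$, there are only finitely many ``relevant'' choices for $t$ on those coordinates, and I would check that each relevant choice either collapses the attracted set of the target group or transfers its weight onto a proposal with $O(1)$ intrinsic support. Combining the two cases yields $sc(\win(\hat{P}_D))=O(1)$ while $sc(\win(P))=\Theta(n)$, giving an approximation ratio of $\Omega(n)$ as claimed. I would state the bound uniformly in $k$ by noting the construction's parameters (revealed-set size, number of decoy proposals) can be chosen as explicit functions of $k$ that stay $O(1)$ in $n$.

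The hard part, I expect, will be getting the interaction between the floor in $\floor{(m_i-k)/2}$ and the $k>0$ assumption exactly right: I need the revealed-set sizes $m_i$ tuned so that the agreement threshold is strict enough to preclude a ``universal'' dRep (like the all-ones type used in \Cref{prop:attraction-not-enough}) from both attracting the target group and keeping decoy scores strictly below the target. In particular, the all-ones dRep trick from the preceding proposition is what must be defeated here, so the construction has to break the symmetry that made that single dRep so powerful — likely by forcing the target voters to reveal a $0$ on some proposal so that agreeing with them on their revealed set is incompatible with agreeing with the decoy voters. Verifying that no clever mixed type circumvents both failure modes, for every $k>0$ simultaneously, is where the careful bookkeeping lies.
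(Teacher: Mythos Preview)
Your plan is essentially the paper's approach: hide the optimal proposal from all voters, give each voter a constant-size revealed set so that with $k>0$ the threshold $\floor{(m_i-k)/2}$ forces (near-)full agreement, fragment the voters into small mutually-incompatible groups, and then case-split on the dRep's type. The paper instantiates this with $m_i=2$ and voters arranged in disjoint \emph{pairs} $\{j,j+1\}$ whose revealed votes on their two private proposals are $(1,1)$ and $(1,0)$; since $\floor{(2-k)/2}\le 0$ for $k\ge 1$, attraction requires exact agreement, so a dRep can pick up at most one voter per pair.

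The one point your outline leaves underspecified --- and it matters because the paper's tie-breaking rule \emph{favors} the optimal proposal --- is the mechanism guaranteeing the decoy proposal \emph{strictly} beats the target, not merely matches it. ``Tie-breaking or a strict lead'' won't do: ties go to $\win(P)$, so you need the strict lead unconditionally. The paper secures this via a leakage argument: if the dRep attracts voter $i$ from some pair, it must have voted $1$ on the first proposal of that pair (both pair members approve it), and then the \emph{non-attracted} pair member votes for that same proposal directly, giving it score $|A(t)|+1>\hat{sc}_D(m)=|A(t)|$. Your ``dummy group / decoy voters'' framing suggests a separate population whose purpose is to inflate a competitor, but the cleaner device is that every attracted voter has a partner who was \emph{not} attracted yet still directly approves the very proposal the dRep was forced to endorse. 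Make that coupling explicit and your argument goes through.
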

\begin{proof}
Consider an instance with an odd number of $m>3$ proposals and $n=m-1$ voters, where $k_i >0, \forall i\in [n]$, such that:
\begin{itemize}[leftmargin=*]
\item[-] For every voter $i\in [n]$, their preferences with respect to proposal $m$ are as follows: $v_i(m)=1$ and $\hat{v}_i(m)=\bot$. \medskip

\item[-] The remaining $m-1$ proposals are partitioned in $\frac{m-1}{2}$ pairs, say $\{1,2\}, \{3, 4\}, \ldots, \{m-2, m-1\}$ and for each one of these $\frac{m-1}{2}$ pairs of proposals, say $\{j, j+1\}$, there are two distinct voters, namely voters $j$ and $j+1$, where voter $j$ votes in favor of both proposals $j$ and $j+1$ whereas voter $j+1$ votes in favor of proposal $j$ but against proposal $j+1$; for every other proposal $j'$, the preferences of these voters are $v_i(j')=0$ and $\hat{v}_i(j')=\bot$, where $i\in \{j,j+1\}$.
\end{itemize}

Say that $P$ and $\hat{P}$ are the intrinsic and revealed profiles of the created instance, respectively. Clearly, $\win(P)=m$ and $\opt(P)=n$. We claim that a single dRep, called $t$, regardless of their advertised type, will contribute to electing a proposal $i$ that satisfies $sc(i)\leq 2$, leading to an inapproximability of $\frac{n}{2}$. Towards this, first notice that $t$ cannot attract both voters of any pair. This is easy to see, as a distance of $\max\{0, \floor{\frac{m_i - k_i}{2}}\}$ for $m_i = 2$ and $k_i\ge 1$ means agreement on both revealed proposals. As a result, for any such pair of voters, $t$ will either attract zero or one voter(s), given that any two voters do not share the same preferences with respect to both their revealed proposals. 

Consider first the scenario where $A(\{t\})=\emptyset$. In this case, since $\hat{sc}_D(m)=0<\hat{sc}_D(j)$ for any proposal $j\neq m$, it holds that there exists a proposal $j'$ such that $\win(P_D)=j'$ for which $sc(j')=2$, or in other words, the direct voting will lead to the election of an outcome that is being accepted by exactly two voters. 
On the other hand, if $t$ attracts at least one voter, say $i$ where $i$ is odd (resp.~$i$ is even), then $t$ must have voted in favor of at least one proposal apart from proposal $m$, namely for proposal $i$ (resp.~for proposal $i-1$), or else $i$ would not fully agree with $t$. But then, voter $i+1$ (resp.~$i-1$), who is not attracted by $t$, is also voting in favor of proposal $i$ (resp.~for proposal $i-1$). This results to a proposal $i$ such that $\hat{sc}_D(i)=\hat{sc}_D(m)+1$. Therefore, the winning proposal is not $m$ but a proposal $i$ for which $sc(i)\leq2$.
\end{proof}

\Cref{2prop:hard_no_coherent_noproof} should be interpreted as a very strong impossibility result since it holds even in the ``best-case scenario'' (see the discussion in \Cref{sec:our-results}).
A natural follow-up question is whether some meaningful domain restriction can circumvent this impossibility. For this, we will appeal to the notion of coherent sets of voters, defined in \Cref{sec:prelims} and we will show a bounded approximation guarantee that degrades smoothly as the size of the largest coherent set grows and as the agreement threshold decreases.

\begin{restatable}{theorem}{onecoherentthm}
\label{2thm:1drep-coh}
For a single dRep, \problem{} admits an approximation ratio of $\min\Big\{n, \frac{3n(k+2)}{2|S|}\Big\}$, where $S$ is the largest coherent set of voters in the instance.
\end{restatable}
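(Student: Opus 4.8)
The plan is to split the stated minimum into its two terms and reduce the nontrivial term to a clean single-target construction. By \Cref{def:apx} it suffices to exhibit one dRep $D=\{t\}$ with $sc(\win(\hat{P}_D))\ge \opt(P)\cdot\max\{1/n,\ 2|S|/(3n(k+2))\}$. When $|S|\le \tfrac{3(k+2)}{2}$ the active term is $1/n$, and since $\opt(P)\le n$ it is enough to make \emph{some} proposal of positive intrinsic score win (the $n$-approximation); this is easy, e.g. by letting $t$ copy the full intrinsic vector of a voter approving $\win(P)$. So the real work lies in the regime $|S|>\tfrac{3(k+2)}{2}$, where the target becomes $sc(\win(\hat{P}_D))\ge \tfrac{2|S|}{3n(k+2)}\,\opt(P)$.

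The first key step is a winner-certification lemma for \emph{singleton} dReps: if $t$ approves exactly one proposal $p$ (write $t=\mathbf{e}_p$), then the elected proposal $j^\star$ satisfies $sc(j^\star)\ge\min\{sc(p),\,|A(t)|\}$. Indeed, if $j^\star=p$ we are done; otherwise $t(j^\star)=0$, so $\hat{sc}_D(j^\star)$ is made up only of direct approvals, which are genuine because revealed preferences are truthful, giving $sc(j^\star)\ge \hat{sc}_D(j^\star)\ge \hat{sc}_D(p)\ge |A(t)|$. Taking $p=\win(P)$ makes $sc(p)=\opt(P)$ the easy term, so the whole task collapses to lower-bounding the attraction $|A(\mathbf{e}_{\win(P)})|$, and it suffices to show it is at least $\tfrac{2|S|}{3n(k+2)}\,\opt(P)$.

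Next I analyze attraction inside the largest coherent set $S$, writing $R$ for its common revealed set, $r=|R|$, and $b_i=|\{j\in R: v_i(j)=1\}|$ for the number of approvals of $i\in S$ on $R$; the attraction threshold for these voters is $\floor{(r-k)/2}$. The argument splits on the approval density of $S$. In the \emph{sparse} case (many voters with small $b_i$), a voter with few approvals on $R$ is automatically within the threshold of $\mathbf{e}_{\win(P)}$ (when $\win(P)\notin R$ its distance is exactly $b_i$), so $|A(t)|$ is large and the singleton lemma closes the case. In the complementary \emph{dense} case, a double-counting over the $|S|\times r$ revealed block of $S$ forces the existence of a proposal approved by a large fraction of $S$ — large enough, once balanced against the threshold $\floor{(r-k)/2}$, to clear the target $\tfrac{2|S|}{3n(k+2)}\opt(P)$; this balancing is precisely what produces the factor $\tfrac{3(k+2)}{2}$.

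The main obstacle is this dense case: the popular proposal $p$ has ample intrinsic score, but its supporters approve \emph{many} proposals, so $\mathbf{e}_p$ attracts almost none of them and the clean singleton lemma no longer applies. Resolving it requires a non-singleton type that approves $p$ together with an aligned block of \emph{equally popular} proposals, so as to capture the high-approval voters, while invoking the intrinsic-score tie-breaking rule to ensure that whichever approved proposal ends up winning still has intrinsic score above the target. Simultaneously controlling the size of $A(t)$ and the identity of the winner — so that the boosted approval set contains only high-scoring proposals — is the delicate part, and it is the interaction of the case split with the exact value $\floor{(r-k)/2}$ that pins down the stated constant. Finally, the boundary case $\win(P)\in R$ is handled analogously, using that every voter in $S$ then reveals $\win(P)$, making it already a competitive direct-vote target.
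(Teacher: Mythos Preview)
Your singleton lemma (for $t=\mathbf{e}_{\win(P)}$, either $\win(P)$ wins or the winner $j$ has $sc(j)\ge \hat{sc}_D(j)\ge |A(t)|$) is correct and matches the paper's implicit reasoning. The sparse/dense dichotomy on $S$ is also the right shape. But your handling of the dense case is a genuine gap.

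You correctly observe that double-counting over $S\times R$ produces a proposal $p\in R$ with $\hat{sc}(p)$ at least the target $2|S|/(3(k+2))$. You then declare this the ``main obstacle'' and propose to construct a non-singleton dRep that approves $p$ together with a block of ``equally popular'' proposals, relying on tie-breaking to ensure the winner is one of them. This is both unnecessary and unsubstantiated: you never show how to choose such a block so that \emph{every} approved proposal has intrinsic score above the target while \emph{simultaneously} $A(t)$ is large enough to make one of them win. The much simpler resolution---and the one the paper uses---is that once some $p$ has $\hat{sc}(p)\ge 2|S|/(3(k+2))$, you take $D=\emptyset$ (allowed since $|D|\le\lambda=1$); then $\win(\hat P_D)=\win(\hat P)$ and $sc(\win(\hat P))\ge \hat{sc}(\win(\hat P))\ge \hat{sc}(p)$ already clears the target. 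The paper states the dichotomy exactly this way in its key lemma: either $sc(\win(\hat P))\ge 2|S|/(3(k+2))$ or $|A(\dRepzero)|\ge 2|S|/(3(k+2))$, and the dRep choice ($\emptyset$ versus $\{\dRepzero\}$) is made accordingly.

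A secondary point: you assert that ``this balancing is precisely what produces the factor $3(k+2)/2$'' without carrying out the computation. The paper obtains the constant via a different route: it first \emph{peels off} an arbitrary set $S_k\subseteq R$ of $k$ proposals and shows (using that every revealed score is below the target) that at least $2|S|/(k+2)$ voters in $S$ fully agree with $\dRepzero$ on $S_k$; it then applies the $k=0$ version of the argument (Claim~\ref{cl:1/3}) to this subset on $C\setminus S_k$, which supplies the extra factor of $3$. Your direct Markov argument on all of $R$ can in fact be pushed to give the same constant (one checks $r/(\lfloor(r-k)/2\rfloor+1)\le (3k+4)/2$ for all $r\ge k$), but this verification is missing from your sketch.
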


\begin{proof}

We denote by $\dRepzero$ and $\dRepone$ the delegation representatives whose advertised types with respect to a proposal $j\in \I$, are as follows:
\begin{linenomath*}
 \[
 \dRepzero(j)=\begin{cases}
     1, \text{\,if } j=\win(P),\\
     0, \text{\,otherwise.}
 \end{cases} \!\!\!
 \dRepone(j)=1,\, \forall j\in \I.
 \]
\end{linenomath*}

We begin with the following statement. It is a direct consequence of the proof of \cref{Nlem:0direct}, which we will present later on in the text, but will also come in handy for the present proof. To maintain the flow of our presentation and prevent redundant repetition of arguments, we opted to use this forward reference.

\begin{claim}
\label{cl:1/3}
    Consider a coherent instance of \problem, for any set of proposals, and any set of voters, either $1/3$ of the voters agree with $\dRepzero$ on at least half the proposals, or the winning proposal of direct voting receives a score that is at least $1/3$ of the votes.
\end{claim}

Next, we state and prove the following lemma. 

\begin{lemma}
\label{2lem:1coh}
In a coherent instance of \problem, 
        either $sc(\win(\hat{P}))\geq$ $\frac{2n}{3(k+2)},$ or $|A(\dRepzero)|\geq \frac{2n}{3(k+2)}.$ Hence, an approximation factor of $\frac{3(k+2)}{2}$ can be achieved.
\end{lemma}
\begin{proof}[Proof of \cref{2lem:1coh}] 
    Consider a coherent instance, and let $R$ be the set of proposals that are revealed to all voters. If $\win(P)\in R$, then the optimal proposal will be elected by direct voting. Therefore, we focus on the case where $\win(P)\notin R$.
    Suppose that there exists a proposal $j' \in \I$ such that $\hat{sc}(j')\geq \frac{2n}{3(k+2)}$. But then the proof follows by the fact that $sc(win(\hat{P}))\geq \hat{sc}(win(\hat{P}))\geq \hat{sc}(j')$.    
    So we can also assume that for every proposal $j\in \I$, it holds that $\hat{sc}(j)<\frac{2n}{3(k+2)}$.
To continue with the proof of the lemma, we will need the following claim.
\begin{claim}
\label{Ncl:full agreement}
    Consider a coherent instance, with $R$ being the set of proposals commonly revealed to all voters. Suppose also that $\hat{sc}(j)<\frac{2n}{3(k+2)}$ for any $j\in R$. For any $r \in [k]$ and for any set $S_r\subseteq R$ of $r$ proposals, let $Z(S_r,\dRepzero)$ be
    the set of voters that totally agree with \dRepzero{} in all proposals of $S_r$. Then $|Z(S_r,\dRepzero)| \geq \frac{(k+2-r)n}{(k+2)}.$
\end{claim}

\begin{proof}[Proof of \cref{Ncl:full agreement}]
We will prove the statement by induction on $r$. 
To prove it for $r=1$, we define $N_j:=|\{i \in N: \hat{v}_i(j)=0\}|$, for any proposal $j\in \I$. Then, by the fact that $\hat{sc}(j)<\frac{2n}{3(k+2)}<\frac{n}{(k+2)}$ it holds that $N_j\geq n- \frac{n}{(k+2)} \geq \frac{(k+1)n}{(k+2)}$, and the induction base follows. Say now that the statement holds for a fixed $r'<k$ and call $S_{r'}\subseteq R$ an arbitrary set of $r'$ proposals. We build a set $S_{r'+1}$ by adding to $S_{r'}$ an arbitrary proposal $j\in R\setminus S_{r'}$. 
 Observe that by the fact that $\hat{sc}(j)<\frac{2n}{3(k+2)}<\frac{n}{(k+2)}$, it holds that
    at least $(|Z(S_{r'},\dRepzero)|  - \frac{n}{(k+2)})$ voters from $Z(S_{r'},\dRepzero)$ are voting against $j$. Therefore,
\begin{linenomath*}  
\begin{align*}
    |Z(S_{r'+1},\dRepzero)|\geq  \frac{(k+2-r')n}{(k+2)} - \frac{n}{(k+2)}
               = \frac{(k+2-(r'+1))n}{(k+2)}. 
\end{align*}
\end{linenomath*}           
Hence, \Cref{Ncl:full agreement} indeed holds.
\end{proof}

We now continue with proving \cref{2lem:1coh}. 
We fix $r=k$ and a set $S_k\subseteq R$ of $k$ proposals, in \Cref{Ncl:full agreement}. Note that by the discussion in \Cref{sec:prelims}, we know that $k\leq m_i$ for every voter $i$, and therefore $|R|\geq k$, so that we can choose such a set $S_k$. We then apply \Cref{cl:1/3} for the (coherent) subinstance induced by the voters in $Z(S_k,\dRepzero)$ and the proposals in $\I\setminus S_k$. This implies that either at least $\frac{|Z(S_k,\dRepzero)|}{3}$ voters agree with \dRepzero{} in
at least $\ceil{\frac{|\I\setminus S_k|}{2}}$ proposals of $\I \setminus S_k$, or there is a proposal $j\in \I \setminus S_k$ with $\hat{sc}(j)\geq \frac{|Z(S_k,\dRepzero)|}{3}$. Using now \Cref{Ncl:full agreement}, we have that
$\frac{|Z(S_k,\dRepzero)|}{3} \geq \frac{2n}{3(k+2)}$, and thus the second case is infeasible, since we have assumed that $\hat{sc}(j)<\frac{2n}{3(k+2)}$, for any $j\in \I$. Coming to the first case, we note that the voters that agree with \dRepzero{} in
at least $\ceil{\frac{|\I\setminus S_k|}{2}}$ proposals of $\I \setminus S_k$, also agree with \dRepzero{} in 
all proposals of $S_k$, by definition. 
This leads to a total agreement with \dRepzero{} of at least 
\begin{linenomath*}
\[ k+\ceil*{\frac{|\I\setminus S_k|}{2}}=k+\ceil*{\frac{m-k}{2}}\geq \ceil*{\frac{m+k}{2}}.
\]
\end{linenomath*}

Therefore, for every voter $i \in Z(S_k,\dRepzero)$, it holds that $d(i,\dRepzero)\leq m-\ceil{\frac{m+k}{2}} = \floor{\frac{m-k}{2}}$. This leads to $|A(\dRepzero)| \geq |Z(S_k,\dRepzero)| \geq \frac{2n}{3(k+2)}$ which, provided that \Cref{Ncl:full agreement} holds, completes the proof of the lemma.
\end{proof}

Applying \cref{2lem:1coh} to the largest coherent set of the given instance immediately proves the statement of the theorem.
\end{proof}

An immediate but noteworthy corollary of \Cref{2thm:1drep-coh} concerns majority agreement and coherent instances.

\begin{restatable}{corollary}{zerodirectcoh}
    \label{corol:0direct-coh}
For a single dRep, \problem{} for coherent instances and majority agreement admits an approximation ratio of $3$.
\end{restatable}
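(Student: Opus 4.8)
The plan is to obtain this as an immediate instantiation of \Cref{2thm:1drep-coh}. Recall from the preliminaries that majority agreement is precisely the regime $k=0$, and that an instance is coherent exactly when the whole voter set $N$ is coherent, so the largest coherent set $S$ of the instance equals $N$ and hence $|S|=n$. First I would substitute these two values into the guarantee $\min\{n, \frac{3n(k+2)}{2|S|}\}$ established by \Cref{2thm:1drep-coh}, which yields $\min\{n, \frac{3n\cdot 2}{2n}\} = \min\{n,3\}$.

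It then remains only to observe that $\min\{n,3\}\le 3$ for every $n$: when $n\ge 3$ the bound is exactly $3$, while when $n<3$ the bound is $n\le 3$, which is an even stronger guarantee (a smaller approximation factor is better, cf.\ \Cref{def:apx}). In either case a single dRep achieves a $3$-approximation, which proves the corollary. There is no real obstacle here beyond this trivial case distinction on $n$; all of the substance has already been carried out in the proof of \Cref{2thm:1drep-coh} (and the supporting \Cref{2lem:1coh}), and this corollary merely specializes that result to the natural parameters of the majority-agreement, coherent regime.
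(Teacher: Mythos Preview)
Your proposal is correct and matches the paper's own approach exactly: the paper also presents this corollary as an immediate instantiation of \Cref{2thm:1drep-coh}, obtained by plugging in $k=0$ and $|S|=n$. Your brief case distinction on $n$ simply makes explicit why $\min\{n,3\}\le 3$, which is the only detail needed beyond the substitution.
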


The attentive reader might have observed that for majority agreement and coherent instances, the general impossibility result of \Cref{2prop:hard_no_coherent_noproof} does not apply. In that case, one might wonder what the best achievable approximation ratio is.
To partially answer this question we offer the following negative result.

\begin{restatable}{theorem}{twoinapprox}
\label{2prop:2inapprox-coh}
Let $\varepsilon>0$. For a single dRep, \problem{} does not admit a $(1.6-\varepsilon)$ approximation, even for coherent instances and majority agreement. 
\end{restatable}

\begin{proof}
\begin{figure}[]
\centering
  \includegraphics[scale=0.26]{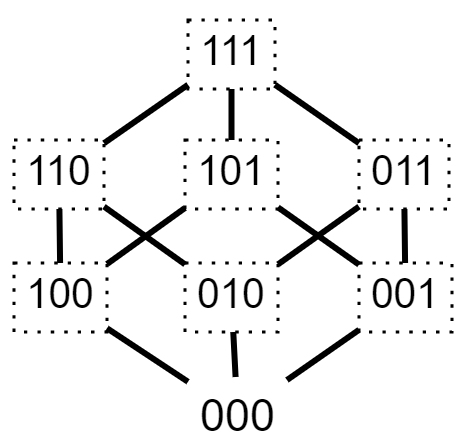}
  \caption{  
The graph illustrates the voters' preferences on the set of revealed proposals in the instance created for the proof of \cref{2prop:2inapprox-coh}. Each boxed vertex represents the existence of a voter with the corresponding ballot, with respect to proposals $c_2, c_3, c_4$. 
If a dRep advertises the ballot of a vertex $v$ in this graph, they will attract all voters whose preferences are within a distance of $1$ from $v$. } 
\label{fig:hard}
\end{figure}

Consider an instance in which $N=\{v_1,v_2,\ldots,v_8\}$ and $\I=\{c_1,c_2,c_3,c_4\}$. Suppose that $v_i(c_1)=1$ and that $\hat{v_i}(c_1)=\bot$, for every voter $i\in N$. Furthermore, there is exactly one voter whose preferences with respect to proposals $c_2,c_3,c_4$ belongs in $\{110,101,011,100,010,001\}$
and two voters that are voting for $\{111\}$. Note that in the current proof, for simplicity, we abuse the notation and use strings instead of ordered tuples to indicate voters' preferences. In this instance, $\opt(P)=sc(c_1)=8$ and $\hat{sc}(j)=5,\forall j\in \I\setminus \{c_1\}$. Therefore, direct voting cannot result in an approximation factor that is better than $\frac{8}{5}=1.6$. We will prove that for any possible choice of advertised ballot of a dRep, and if $D=\{t\}$, then $sc(\win(\hat{P}_D))\leq 5$, which again results to the claimed approximation factor. Figure \ref{fig:hard} will be helpful as an illustration of the arguments that are going to be used.

\begin{itemize}
    \item If $t=111$, then $A(D)$ equals the set of voters whose preferences belong in $\{111,110,101,011\}$. Therefore $|A(D)|=5$ and hence $\hat{sc}_D(c_1)=5$. However $c_2$ is both approved by the dRep and by a voter that doesn't belong to $A(D)$, consider, e.g., the voter who is voting for $100$, which leads to $\hat{sc}_D(c_2)=6$ and hence to a winning proposal $\win(\hat{P}_D)$ such that $\hat{sc}_D(I')\geq6$. Hence, $\win(\hat{P}_D) \neq c_1$, and, consequently, $sc(\win(\hat{P}_D))=5$. 
\item If $t=110$, then $A(D)$ equals the set of voters whose preferences are $\{111,$ $110,100,010\}$. Therefore $|A(D)|=5$ and hence $\hat{sc}_D(c_1)=5$. Using the same rationale to before, one can observe that, again, $sc(\win(\hat{P}_D))=5$. The proof is identical for $t=101$ and $t=011$.
\item If $t=100$, then $A(D)$ equals the set of voters whose preferences are $\{100,110,101\}$. Therefore $|A(D)|=3$ and hence $\hat{sc}_D(c_1)=3$. However $c_2$ is both approved by the dRep and by the voter whose preference vector is $111$, who does not belong to $A(D)$, which leads to $sc(\win(\hat{P}_D))=5$. The proof is identical for $t=010$ and $t=001$.
\item If $t=000$, then $|A(D)|=3$, but $\hat{sc}_D(c_2)=4$, which again leads to a winning proposal of $sc(\win(\hat{P}_D))=5$.\qedhere
\end{itemize}
\end{proof}

We now note that if slightly relax our best-case scenario setting (see \Cref{sec:our-results}) and consider arbitrary tie-breaking rules, then we can strengthen \Cref{2prop:2inapprox-coh} to the following impossibility result.

\begin{restatable}{proposition}{twoinapproxnew}
\label{2prop:2inapprox-coh2}
Let $\varepsilon>0$. For a single dRep, \problem{} does not admit a $(2-\varepsilon)$-approximation if ties are broken in favor of the proposal with maximum revealed score.
\end{restatable}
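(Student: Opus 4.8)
The plan is to strengthen the construction behind \Cref{2prop:2inapprox-coh} by exploiting the adversarial tie-breaking to push the ``hidden'' optimum out of contention entirely. I would build a coherent, majority-agreement instance ($k=0$) with a single \emph{good} proposal $c^\ast$ that every voter approves intrinsically but reveals $\bot$ on (so $\hat{sc}(c^\ast)=0$), together with $B$ \emph{bad} proposals that every voter reveals. The optimum is then $\win(P)=c^\ast$ with $\opt(P)=n$. The crux is to design the revealed ballots on the $B$ bad proposals so that (a) every voter reveals a \emph{strict majority} of approvals, i.e.\ at least $\floor{B/2}+1$ ones, and (b) each bad proposal receives intrinsic (equivalently, revealed) score as close to $n/2$ as possible. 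A balanced circulant pattern achieves both: with $n=tB$, let voter $i$ approve the $B/2+1$ cyclically-consecutive bad proposals starting at $i \bmod B$; then every bad proposal has score exactly $\tfrac{n}{2}+\tfrac{n}{B}$, while every voter approves $\tfrac{B}{2}+1$ proposals.

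The core of the argument is a case analysis over the single dRep's advertised ballot $t$, showing $c^\ast$ can never be elected. If $t(c^\ast)=0$, then $\hat{sc}_D(c^\ast)=0$, because the only other potential contributors to $c^\ast$ are the direct voters, who all reveal $\bot$ on it. If $t(c^\ast)=1$ but $t$ rejects every bad proposal, then on each voter's revealed set (the bad proposals) $t$ is identically $0$, so $d(i,t)$ equals the number of ones that voter $i$ reveals, which exceeds $\floor{B/2}$ by property (a); hence $A(\{t\})=\emptyset$, the dRep carries no weight, and again $\hat{sc}_D(c^\ast)=0$. In the remaining case, $t(c^\ast)=1$ and $t$ approves some bad proposal $c'$: then $\hat{sc}_D(c')\ge|A(\{t\})|=\hat{sc}_D(c^\ast)$, so $c'$ ties-or-beats $c^\ast$, and since $\hat{sc}(c^\ast)=0$ while $c'$ has positive revealed score, the tie-break in favor of maximum revealed score never selects $c^\ast$.

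Putting these together, for every dRep the elected proposal is one of the bad proposals, each of intrinsic score $\tfrac{n}{2}+\tfrac{n}{B}$, whereas $\opt(P)=n$. This yields an approximation ratio of at least $\frac{n}{\,n/2+n/B\,}=\frac{2B}{B+2}$, which tends to $2$ as $B\to\infty$; choosing $B$ large enough makes it exceed $2-\varepsilon$.

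The step I expect to be the main obstacle --- and the place where the proof genuinely uses the weakened tie-breaking rule --- is reconciling properties (a) and (b). Requiring every revealed ballot to have a strict majority of approvals is exactly what forbids the dRep from simultaneously attracting voters and rejecting all bad proposals (the ``reject everything, approve only $c^\ast$'' escape that defeats naive constructions); yet a strict majority of ones on every row forces the average bad-proposal score strictly above $n/2$, so the ratio cannot reach $2$ exactly, and one must verify that the balanced circulant design keeps every column sum equal to $n/2+n/B$, so that the gap vanishes only in the limit $B\to\infty$. I would also note explicitly that under the best-case (maximum-intrinsic-score) tie-breaking this instance collapses --- $c^\ast$ would win every tie --- which is consistent with the $3$-approximation of \Cref{corol:0direct-coh} and confirms that the adversarial tie-breaking hypothesis is essential.
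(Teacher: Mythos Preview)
Your proof is correct, but it takes a considerably more elaborate route than the paper's. The paper exhibits a minimal instance with just two voters and two proposals: both voters intrinsically approve the hidden proposal $c_1$, while on the single revealed proposal $c_2$ they disagree. Any single dRep attracts exactly one voter, so $\hat{sc}_D(c_1)\le 1=\hat{sc}_D(c_2)$ always holds, and the revealed-score tie-break selects $c_2$; this gives the ratio~$2$ exactly, in one shot.

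Your circulant construction with $B$ bad proposals is sound---the three-case analysis is clean, and the strict-majority-of-ones device neatly rules out the ``approve only $c^\ast$'' dRep---but it only reaches the bound in the limit $B\to\infty$, whereas the paper's two-voter gadget hits~$2$ on the nose. The extra machinery does buy something: your instance has arbitrarily large revealed sets and remains coherent, so it shows the impossibility is not an artifact of tiny $m_i$. For the bare statement of the proposition, however, the paper's approach is strictly simpler and tighter.
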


\begin{proof}
    The claim is straightforward if we consider an instance of two candidate proposals, namely $c_1$ and $c_2$, and two voters whose preferences are depicted in the table below. Note that only voters' preferences with respect to $c_2$ are revealed to them.
\begin{center}   
\begin{tabular}{@{}ccc@{}}
\toprule
                                    & \textbf{$c_1$}            & \textbf{$c_2$}         \\ \midrule
\multicolumn{1}{|c}{\textbf{voter $1$}} & \cellcolor[HTML]{C0C0C0}1 & \multicolumn{1}{c|}{1} \\ \midrule
\multicolumn{1}{|c}{\textbf{voter $2$}} & \cellcolor[HTML]{C0C0C0}1 & \multicolumn{1}{c|}{0} \\ \bottomrule
\end{tabular}
\end{center}
    Obviously, $\opt(P)=2$ and $\opt(\hat{P})=1$. Furthermore any dRep can attract at most one voter and hence, for any $D$ such that $|D|=1$, it holds that $\hat{sc}_D(c_1)\leq \hat{sc}_D(c_2)$. In case of ties, these are once again broken in favor of the proposal of maximum approval score in the revealed profile, i.e. in favor of $c_2$, since $\hat{sc}(c_2)>0=\hat{sc}(c_1)$. Hence, in any case, $\win(\hat{P}_D)=c_2$, and thus $sc(\win(\hat{P}_D))=1,$ which equals $\frac{1}{2}\opt(P)$.
\end{proof}

\begin{proposition}
\label{new inapprox}
    \emph{The instance used in \cref{2prop:2inapprox-coh2} can be generalized towards proving that for any acceptance threshold $k>0$, there exists an instance in which \problem{} does not admit a $(2-\epsilon)$-approximation, for any $\epsilon>0$, if $\lambda\leq 2^{k-1}$, for arbitrary tie-breaking rules.}
\end{proposition}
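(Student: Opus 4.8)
The plan is to generalize the two-voter instance of \Cref{2prop:2inapprox-coh2} to a hypercube of voters, so that the doubling of the dRep budget (from $1$ to $2^{k-1}$) is matched by a doubling of the voter population per unit increase of the threshold. Concretely, I would take a single hidden ``good'' proposal $c_1$ together with $k$ ``feature'' proposals $f_1,\dots,f_k$, and introduce $n=2^k$ voters indexed by the binary strings $b\in\{0,1\}^k$: voter $b$ supports $c_1$ intrinsically but leaves it unrevealed ($v_b(c_1)=1$, $\hat v_b(c_1)=\bot$), and reveals exactly the features, with $\hat v_b(f_j)=v_b(f_j)=b_j$. Thus every voter has revealed set $\{f_1,\dots,f_k\}$, so $m_i=k$ for all $i$ (the instance is in fact coherent). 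A quick tally gives $sc(c_1)=n=\opt(P)$ while $sc(f_j)=n/2$ for every $j$, so every proposal other than $c_1$ has intrinsic score at most $\opt(P)/2$; hence it suffices to argue that $c_1$ cannot be elected.

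The first observation is that the acceptance threshold collapses to exact agreement: since $m_i=k$, we have $\lfloor (m_i-k)/2\rfloor =0$, so a dRep attracts voter $b$ only if it matches $b$ on all $k$ features. As there is exactly one voter per type, each dRep attracts at most one voter, and therefore $|A(D)|\le \lambda\le 2^{k-1}=n/2$ for any admissible $D$. Because $c_1$ is unrevealed, a dRep's vote on $c_1$ has no effect on any distance, so we may assume every dRep votes $1$ on $c_1$ for free; then $\hat{sc}_D(c_1)=|A(D)|\le n/2$.

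The crux of the argument is an invariance lemma for the features: I claim $\hat{sc}_D(f_j)=n/2$ for every $j$ and every $D$. The point is that a dRep can attract voter $b$ only by voting exactly $b_j$ on each $f_j$, so delegation to an attracting dRep is vote-preserving on the features. Consequently each voter---whether it votes directly or delegates---contributes $[\,b_j=1\,]$ to $f_j$ exactly as in direct voting, and summing over all voters yields $\hat{sc}_D(f_j)=\hat{sc}(f_j)=n/2$. Combining this with $\hat{sc}_D(c_1)\le n/2$ shows that $c_1$ never strictly exceeds any feature; under the (adversarial, or simply ``maximum revealed score'') tie-breaking rule the winner is then some $f_j$, whose intrinsic score is $n/2=\opt(P)/2<\opt(P)/(2-\varepsilon)$. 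This holds for every $D$ with $|D|\le 2^{k-1}$, giving the claimed inapproximability.

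I expect the main obstacle to be the invariance lemma: one must resist the temptation to let dReps ``hide'' their support and instead notice that the exact-match requirement pins a dRep's feature votes to those of the single voter it attracts, so delegation can never reduce (nor inflate) a feature's score, leaving $c_1$ as the only proposal delegation could boost---and only up to the attracted weight, which the budget caps at $n/2$. Everything else is bookkeeping, namely verifying the threshold arithmetic and the intrinsic-score tally.
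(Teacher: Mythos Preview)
Your proposal is correct and follows essentially the same approach as the paper: both constructions place $2^k$ voters on a Boolean hypercube of revealed ``feature'' proposals so that the threshold forces exact agreement, cap the attracted weight at $\lambda\le 2^{k-1}$, and use the observation that exact-match delegation is vote-preserving on the features (your ``invariance lemma'', which the paper invokes implicitly when asserting $\hat{sc}_D(2)=\hat{sc}(2)=\lambda$). Your version is in fact slightly cleaner---you index voters directly by binary strings and omit the paper's extra all-zero revealed proposal, which plays no role in the argument.
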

\begin{proof}
Say that $\lambda=2^{k-1}$ and consider a coherent instance, where $R=R_i,$ for any voter $i$, with $n=2\lambda$ voters and $m=R+1$. Given the value of $k$, we define $m$ such that $k=R-1$. The preferences of the voters are as follows:

\begin{itemize}
    \item All voters approve proposal $1$, which is the only proposal that does not belong to $R$.
    \item Only the first $\lambda$ voters approve proposal $2$. 
    \item Only the 1st and 3rd group of $\frac{\lambda}{2}$ voters approve proposal $3$.
    \item Only the 1st, 3rd, 5th, 7th group of $\frac{\lambda}{4}$ voters approve proposal $4$.\\
    $\vdots$
    \item Only voters $\{1,3,5,\ldots,n-1\}$ approve proposal $m-1$.
    \item None of the voters approve proposal $m$. 
\end{itemize}

Clearly, the optimal proposal has an approval score of $2\lambda$. Furthermore, $\ceil{\frac{R+k}{2}}=\ceil{\frac{2R-1}{2}}=R$ and hence, any dRep, can attract at most one voter, and this can be done only by advertising that voter's ballot. As a consequence, introducing any set of dReps $D$, of size no more than $\lambda$, will result in $\hat{sc}_D(1)\leq \lambda$, but at the same time $\hat{sc}_D(2)=\hat{sc}(2)=\lambda$. Breaking-ties in an adversarial manner, a proposal of intrinsic approval score $\lambda$ is being elected.
\end{proof}

To understand the power of coherence towards achieving good approximations, it is instructive to explore the limitations of the best possible dReps also on coherent instances. To this end, we provide a couple of results: 
the first generalizes \Cref{2prop:hard_no_coherent_noproof} to be parameterized by the size of the largest coherent set, and the second shows robustness to coherent instances, as long as the agreement thresholds are sufficiently high. The take-away message of those results is that coherent sets are not a panacea, and can result in meaningful approximations only under further appropriate conditions.

\begin{restatable}
{theorem}{hardnocoherentRefinement}
\label{2prop:hard_no_coherent}
Let $\varepsilon>0$. For a single dRep and any $k>0$, \problem\ does not admit an $\big(\frac{n}{|S|}-\epsilon\big)$-approximation, where $S$ is the largest coherent set of voters in the instance.
\end{restatable}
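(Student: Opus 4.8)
The plan is to generalize the lower-bound instance behind \Cref{2prop:hard_no_coherent_noproof} by replacing each \emph{pair} of voters with a \emph{group} of $|S|$ voters, so that the largest coherent set has size exactly $|S|$ while the optimum stays equal to $n$. Concretely, I would fix a target size $s := |S|$ and use $r := k+1$ revealed proposals per group, so that the delegation threshold is $\floor{\frac{m_i-k}{2}}=0$ and attracting a voter requires \emph{exact} agreement on their revealed set. I would then form $g\ge 2$ groups of $s$ voters each, with $n=g\cdot s$. Every voter approves a single hidden ``secret'' proposal $m^*$ with $\hat{v}_i(m^*)=\bot$, so that $\win(P)=m^*$ and $\opt(P)=n$. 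Each group $\ell$ gets its own block $P_\ell$ of $r$ proposals, pairwise disjoint across groups, containing a distinguished proposal $c_\ell$ that all $s$ members of the group approve; the remaining revealed entries are chosen so that the $s$ members have pairwise distinct vectors on $P_\ell$ (possible since $s\le 2^{r-1}=2^{k}$), and every voter rejects all proposals outside $\{m^*\}\cup P_\ell$.

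First I would record the two structural facts that drive the bound. Since the blocks $P_\ell$ are pairwise disjoint and $m^*\notin R_i$, each $R_i$ equals the block of $i$'s group, so two voters are coherent iff they belong to the same group; hence the largest coherent set is a single group and $|S|=s$. Moreover every proposal other than $m^*$ lies in exactly one block and is therefore approved only by voters of that group, so its intrinsic score is at most $s$; in particular $sc(c_\ell)=s$ while $sc(j)\le s$ for all $j\neq m^*$.

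The core of the argument is to show that no single dRep $t$ can make $m^*$ win, after which $sc(\win(\hat{P}_D))\le s$ is immediate from the previous paragraph. I would first observe that $\hat{sc}_D(m^*)\le |A(\{t\})|$, with equality only if $t$ advertises $1$ on $m^*$, because direct voters contribute $0$ on the hidden coordinate. If $A(\{t\})=\emptyset$ we are in direct voting, where $\hat{sc}(m^*)=0<s=\hat{sc}(c_\ell)$, so $m^*$ loses. Otherwise $t$ attracts some voter of a group $\ell$; since the threshold is $0$, $t$ must match that voter on all of $P_\ell$, and as the voter approves $c_\ell$ this forces $t(c_\ell)=1$. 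The decisive point is that distinct revealed vectors guarantee $t$ attracts \emph{at most one} member of group $\ell$, so at least $s-1\ge 1$ members stay direct voters and each approves $c_\ell$; therefore $\hat{sc}_D(c_\ell)\ge |A(\{t\})|+(s-1)>|A(\{t\})|\ge \hat{sc}_D(m^*)$. Thus $m^*$ is strictly dominated and cannot win (the best-case tie-break does not rescue it, since there is no tie), giving $\win(\hat{P}_D)\neq m^*$ and hence $sc(\win(\hat{P}_D))\le s=|S|$. The approximation ratio is then at least $n/|S|>n/|S|-\epsilon$, as claimed.

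The step I expect to require the most care is guaranteeing that the dRep can attract only a \emph{strict} subset of any group: the whole reduction hinges on the surplus $s-a_\ell\ge 1$, so I must ensure the group members cannot all be captured at once. This is exactly where the full-agreement regime ($r=k+1$, threshold $0$) together with pairwise distinct revealed vectors is used, and it is also what ties the attainable range of $s$ to $k$, namely $s\le 2^{k}$ (for $k=1$ this recovers the size-$2$ pairs of \Cref{2prop:hard_no_coherent_noproof}). Pushing $|S|$ beyond $2^k$ would require a positive threshold and spreading each group's vectors as a code of minimum distance exceeding twice the threshold so that no single Hamming ball of that radius swallows a whole group; this changes the bookkeeping but not the conclusion. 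A secondary, routine check is that block disjointness keeps every proposal other than $m^*$ at intrinsic score at most $s$.
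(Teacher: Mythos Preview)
Your argument is correct in the range $2\le s\le 2^{k}$: setting $m_i=k+1$ forces the threshold to $0$, pairwise distinct revealed vectors in each group guarantee any single dRep captures at most one member per group, and the $s-1$ remaining direct voters of that group push $c_\ell$ strictly above $m^*$. The bookkeeping on $\hat{sc}_D(c_\ell)=|A(\{t\})|+(s-1)$ is sound even when $t$ attracts voters from several groups.

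Your route, however, differs from the paper's and is less general. The paper does not design groups of \emph{distinct} voters; it simply takes the pair construction of \Cref{2prop:hard_no_coherent_noproof} and \emph{replicates} each of the two voters in a pair roughly $r/2$ times. Since those two voters already disagree on their common two-proposal revealed set, a threshold-$0$ dRep can match all copies of one type but none of the other; the uncaptured $\lceil r/2\rceil$ copies then supply the surplus direct approvals that make the pair's shared proposal strictly beat $m$. This copy trick works for \emph{every} target coherent-set size $r\ge 2$, with no $r\le 2^{k}$ cap, and so avoids the coding-theoretic extension you sketch for larger $|S|$. What your construction buys is a clean treatment of large $k$ via $m_i=k+1$, but the price is the bound $|S|\le 2^{k}$; the paper's replication idea sidesteps that entirely because it only ever needs \emph{two} distinct revealed vectors per coherent set, regardless of how large the set is.
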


\begin{proof}
Recall that in the proof of \cref{2prop:hard_no_coherent_noproof}, we designed an instance of \problem, in which every coherent set is formed by two voters, which implied inapproximability of $\frac{n}{2}$. At what follows, we will generalize that construction to instances with coherent sets of size up to any $r > 2$. We create $r/2$ copies of each voter (more precisely, $\floor{r/2}$ copies of each voter with an odd index and $\ceil{r/2}$ copies of each voter with an even index). Clearly, the new number of voters is now $n'=n\cdot \frac{r}{2}$. 
The analysis from \cref{2prop:hard_no_coherent_noproof} directly carries over but now a proposal of score at most $r$ gets elected instead of proposal $m$ which has score $n'$. The maximum coherent set is of size $r$, and hence, the $\frac{n'}{r}$ inapproximability.
\end{proof}

\begin{restatable}
{theorem}{hardlargek}
\label{2prop:hard_large_k}
Let $\varepsilon>0$. For a single dRep and $k= m-2c(m)$, with $c(m)\in o(m)$, the approximation ratio of \problem\ is $\Omega(n)$, even for coherent instances.
\end{restatable}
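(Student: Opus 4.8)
The plan is to build a coherent instance in which the large value of $k$ pushes the attraction threshold all the way down to $0$, so that a single dRep is essentially powerless, while a hidden optimal proposal keeps the delegated outcome far from optimal. Concretely, since $k=m-2c(m)$, I would hide exactly $2c(m)$ proposals, so that the common revealed set has size $|R|=m-2c(m)=k$ and the threshold is $\floor{\frac{|R|-k}{2}}=0$. Among the hidden proposals one is a distinguished optimum $c^*$ that every voter approves intrinsically but reveals as $\bot$, so that $sc(c^*)=n=\opt(P)$ and $\hat{sc}(c^*)=0$; the remaining $2c(m)-1$ hidden proposals are dummies approved by no one. The instance is coherent because every voter reveals its opinion on exactly the same set $R$.

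Inside $R$ I would place $n$ ``used'' proposals grouped as $\{p_i,q_i\}_{i=1}^{n/2}$, one group per voter pair $(a_i,b_i)$ (any remaining proposals of $R$, if $k>n$, are disapproved by all voters): voter $a_i$ approves $p_i$ and $q_i$, voter $b_i$ approves only $p_i$, and all voters disapprove every other revealed proposal. Then $sc(p_i)=2$, $sc(q_i)=1$, and, crucially, all $n$ revealed ballots are pairwise distinct (two voters from different groups differ on their $p$'s, and the two voters of one group differ on $q_i$). Because the threshold equals $0$, a dRep attracts a voter $i$ only if its advertised type agrees with $\hat{v}_i$ on \emph{every} proposal of $R$; since the ballots are distinct, a single dRep attracts at most one voter, i.e.\ $|A(D)|\le 1$.

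The heart of the argument is that a zero threshold makes the dRep unable to inflate any revealed score. If the dRep attracts a voter $i_0$, its type restricted to $R$ equals $\hat{v}_{i_0}$, so its weight-$1$ ballot merely replaces $i_0$'s direct ballot and $\hat{sc}_D(j)=\hat{sc}(j)$ for every $j\in R$; if it attracts nobody, the same equality is immediate. Meanwhile every hidden proposal, including $c^*$, can receive at most the dRep's single unit of weight, so $\hat{sc}_D(c^*)\le 1$. Hence the maximum revealed score is still $2$, attained by the proposals $p_i$, and it \emph{strictly} exceeds the score $\le 1$ of $c^*$ and of all dummies; therefore $\win(\hat{P}_D)$ is some $p_i$, and the favourable tie-breaking cannot rescue $c^*$ because $c^*$ is never among the $\hat{sc}_D$-maximisers. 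In every case $sc(\win(\hat{P}_D))=2$ while $\opt(P)=n$, giving ratio $\ge n/2=\Omega(n)$, in the same spirit as \Cref{2prop:hard_no_coherent_noproof} (a hidden optimum shielded by decoys the dRep cannot overcome).

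To avoid circular parameters, I would not fix $n$ and solve for $m$; rather, for each large $m$ I set $n$ to be the largest even integer at most $m-2c(m)$, which is $\to\infty$ because $c(m)\in o(m)$, and realise the $n/2$ pairs on $n$ of the revealed proposals. This makes $k=m-2c(m)=|R|$ hold by construction, so the threshold is exactly $0$, and yields a family of coherent instances with $n\to\infty$ and inapproximability $\ge n/2$. The main obstacle is conceptual rather than computational: one must simultaneously (i) drive the threshold to $0$ so that $|A(D)|\le 1$, and (ii) defeat the best-case tie-break, which forces the decoys to have score exactly $2$ (strictly above the hidden optimum's maximal attainable score of $1$) and forces the two ballots within each pair to be distinguished by $q_i$ so that no dRep can attract both and tie $c^*$ up to score $2$. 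The remaining parameter bookkeeping (parity of $n$, padding $R$ with score-$0$ proposals) is routine and leaves the threshold and every score untouched.
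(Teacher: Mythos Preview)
Your proof is correct and follows the same high-level template as the paper---hide a universally-approved optimum and arrange the revealed ballots so that a single dRep can attract at most one voter---but the implementation differs in an instructive way. The paper hides only \emph{one} proposal (so $m_i=m-1$) and keeps the attraction threshold equal to $c$; it then designs ballots with pairwise Hamming distance at least $2c+2$ (each voter $i<n$ approves a private block of $c+1$ consecutive proposals, voter $n$ approves everything) and uses the triangle inequality to conclude that no dRep can be within distance $c$ of two voters simultaneously. You instead hide $2c(m)$ proposals, which drives the threshold all the way to $0$; this lets you reuse verbatim the simple pair gadget $(\{p_i,q_i\},a_i,b_i)$ from \Cref{2prop:hard_no_coherent_noproof} and replace the triangle-inequality step by the observation that distinct ballots plus threshold $0$ immediately give $|A(D)|\le 1$. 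Your route is shorter and avoids the distance calculation, at the cost of exploiting the freedom to hide many proposals; the paper's route keeps the hidden set minimal and shows more explicitly how the threshold value $c$ interacts with the ballot geometry. Both yield the same $n/2$ bound on coherent instances.
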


\begin{proof}
We are going to construct an instance where all the voters form a single coherent set of size $n$. Moreover, $m_i = m-1$ for all $i\in [n]$.
Hence, we just write $c$ rather than 
$c(m)$, i.e., $k = m -1 -2c$. We assume that $m = (n-1)(c+1) +1$ and $n\ge 4$ (although any large enough $m$ could be used).
Further, in the created instance we have that
\begin{itemize}
\item for every voter $i\in [n]$, her preferences with respect to proposal $m$ are as follows: $v_i(m)=1$ and $\hat{v}_i(m)=\bot$, 

\item voter $i\in [n-1]$ approves proposals $(i-1)(c+1)+1, (i-1)(c+1)+2, \ldots, i(c+1)$ and disapproves everything else (except for $m$, of course, but this does not belong to $R_i$, and the preference of voter $n$ with respect to any proposal $j$ satisfies $\hat{v}_i(j)=1$.

\end{itemize}

Clearly, the optimal solution satisfies $n$ voters. Similarly to the proof of \Cref{2prop:hard_no_coherent}, we claim that a single dRep, $t$, regardless of their advertised type, will contribute  to electing a proposal approved by at most two voters. Towards this, we begin by showing that $t$ cannot attract more than one voter. 

First note that, for $i, j\in [n-1]$, we have 
 $H(\hat{v}_i, \hat{v}_j) = 2c +2.$
Also, for $i\in [n-1]$, we have 
$H(\hat{v}_i, \hat{v}_n) = m-1-(c+1) \ge
3(c+1) + 1 -1- c -1 = 2c +2,$
where we used both $m = (n-1)(c+1) +1$ and $n\ge 4$ for the inequality.
Now, suppose that $t$ attracts at least two distinct voters, say $i, j\in [n]$. Then, by the triangle inequality, we have
\begin{linenomath*}
 \begin{align*}
     2c +2\le H(\hat{v}_i, \hat{v}_j) \le d(i, t) + d(j, t) \le 2 \floor*{\frac{(m -1) - (m-1-2c)}{2}} = 2c,
 \end{align*}
\end{linenomath*}
which is a contradiction. We conclude that $t$ may attract at most one voter.

If $t$ does not attract any voters, then all voters vote directly and this leads to the election of a proposal approved by exactly two voters. 
On the other hand, if $t$ attracts one voter, say $i$, we claim that $t$ must have voted in favor of at least one proposal other than $m$. 
Indeed, if $t(j)=0$ for every proposal $j$ (except maybe for proposal $m$), then $d(i, t)$ would be at least $c+1 > \floor*{\frac{(m -1) - (m-1-2c)}{2}}$ and they would not have attracted voter $i$. As a result, a proposal of total approval $2$ in the intrinsic profile is elected, instead of proposal $m$, and the $n/2$ inapproximability follows.
\end{proof}

We conclude our discussion for the case of one dRep with a complementary result of a computational nature: a theorem that establishes that finding a dRep to attract voters in a way that ultimately elects the optimal proposal is computationally hard. Consequently, \problem\ turns out to be challenging both from the standpoint of information theory and computational complexity.

\begin{restatable}{theorem}{nphardness}\label{thm:delegation-np-hard}
  The decision variant of \problem\ is NP-hard, even for majority agreement and a single dRep.
\end{restatable}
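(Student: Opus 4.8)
The statement to prove is that the decision variant of \problem\ is NP-hard, even under majority agreement ($k=0$) with a single dRep. The plan is to reduce from a well-known NP-complete problem whose structure mirrors the combinatorial core of selecting a single dRep type that simultaneously (i) attracts the right set of voters and (ii) tips the winning proposal to the optimum. A natural candidate is an exact/set-cover–flavored problem or, more promisingly, a variant such as \textsc{Vertex Cover} or \textsc{Exact Cover by 3-Sets}, since the dRep's type is a binary vector over proposals and the attraction condition ``$d(i,t)\le \floor{m_i/2}$'' imposes, per voter, a constraint on how many revealed coordinates $t$ may disagree with. The decision question ``does there exist a single dRep type $t$ such that $\win(P)=\win(\hat P_{\{t\}})$?'' then becomes: does there exist a binary vector satisfying a collection of per-voter agreement constraints while pushing enough delegated weight onto the intended winning proposal and keeping every competitor strictly below it?

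First I would fix the gadget layout: designate one proposal $j^\star=\win(P)$ as the target, and design the intrinsic profile so that $\opt(P)$ is attained uniquely (up to the favorable tie-breaking) by $j^\star$, but $j^\star$ is hidden (revealed to no voter, via $\hat v_i(j^\star)=\bot$), so it can only accrue score through delegation. The dRep must therefore advertise $t(j^\star)=1$ and attract a large set $A(\{t\})$, while ensuring no competing proposal $j'$ overtakes $j^\star$ in $\hat{sc}_{\{t\}}$. The key tension to encode is that attracting a voter $i$ forces agreement on at least half of $i$'s revealed coordinates, which in turn constrains $t$'s values on those ``competitor'' proposals; satisfying all voters' delegation constraints simultaneously while controlling competitor scores is where the combinatorial hardness should live. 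I would encode the instance of the source problem (say an \textsc{X3C} instance with universe elements and triples) into the revealed coordinates: elements correspond to competitor proposals, voters correspond to triples, and the majority-agreement threshold forces a consistent ``selection'' of coordinates of $t$ that corresponds to a valid cover.

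The central correctness argument splits into the two standard directions. For soundness (solution $\Rightarrow$ yes-instance), I would show that a valid cover yields a dRep type $t$ attracting exactly the delegating voters needed so that $\hat{sc}_{\{t\}}(j^\star)$ reaches $\opt(P)$ while every competitor stays below, giving $\win(\hat P_{\{t\}})=j^\star=\win(P)$. For completeness (yes-instance $\Rightarrow$ solution), I would argue the contrapositive: any $t$ achieving $\win(\hat P_{\{t\}})=j^\star$ must attract a voter set whose induced agreement pattern decodes to a feasible cover, using the majority threshold to force the necessary coordinate choices and the competitor-score constraints to enforce disjointness/exactness. The hardness is completeness-direction tightness: the main obstacle is engineering the gadget so that the \emph{only} way to simultaneously satisfy all per-voter majority-agreement constraints and suppress every competitor's score is to realize a genuine solution of the source problem—ruling out ``cheating'' dRep types that attract voters in unintended combinations. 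I would control this by using carefully balanced voter multiplicities (permissible since weights reduce to copies) so that the score thresholds for $j^\star$ versus each competitor leave no slack, forcing $t$ into the intended structure.
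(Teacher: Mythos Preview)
Your high-level architecture is sound and matches the paper's: hide the intrinsic optimum $j^\star$ from everyone so that it can win only via delegated weight, force $t(j^\star)=1$, and arrange competitors so that the dRep must attract a specific large set of voters. Where your proposal falls short is that it remains a plan rather than a reduction: you name \textsc{X3C} or \textsc{Vertex Cover} as candidates, but you never commit to a concrete encoding, and the coupling you sketch (``elements $\to$ competitor proposals, voters $\to$ triples'') does not obviously work. With three revealed coordinates per voter and $k=0$, attraction means agreeing on at least two of three---this is not the same as ``select this triple,'' and you give no mechanism by which the majority-agreement constraints across voters would force an \emph{exact} cover rather than some messier partial agreement. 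You yourself flag the completeness direction as the main obstacle and then leave it open.

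The paper sidesteps this entirely by reducing from \textsc{Minimax Approval Voting}: given ballots $v_1,\ldots,v_n\in\{0,1\}^m$ and threshold $\theta=m/2$, decide whether some $v$ has $\max_i H(v_i,v)\le m/2$. The point is that this is \emph{literally} the attraction condition under majority agreement when every voter has the same revealed set of size $m$. So ``there exists a dRep type attracting all $n$ voters'' and ``the \textsc{MAV} instance is a yes-instance'' are the same statement, with no gadget gymnastics needed for the core constraint. The paper then adds three extra proposals and $n+1$ dummy voters to ensure that (i) the hidden optimum $c_{m+1}$ has intrinsic score $n+2$, uniquely maximal, and (ii) the competitors $c_{m+2},c_{m+3}$ sit at revealed score $n+1$, so the dRep must attract \emph{all} $n$ original voters---missing even one loses to a competitor. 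A short case analysis over the dRep's choices on the three extra coordinates handles the converse. If you want to salvage your approach, the cleanest fix is to swap the source problem to \textsc{MAV} (or the equivalent \textsc{Closest String} over the binary alphabet); the rest of your outline then goes through almost mechanically.
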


\begin{proof}
We will establish NP-hardness for the decision version of \problem, where for some parameter $r$, we want to answer if there exists an advertised type for a dRep, so that the intrinsic score of the elected outcome is at least $r$.
We will reduce from the problem \textsc{minimax approval voting (mav)} problem, which is a known NP-hard problem in voting theory. In \textsc{mav}, we are given an instance $I$ of $m$ binary proposals and $n$ ballots where $v_i \in \{0,1\}^{m}, i\in [n]$ and we are asked for a vector $v$ for which it holds that $\max_{i\in [n]}H(v_i,v)\leq \theta$, where $H$ is the Hamming distance between two vectors of the same size. For the hardness of \textsc{mav} we refer to \citet{legrand2007some,frances}. 
We note that the NP-hardness has been established for instances with $m$ being even, and $\theta = m/2$. Given such an instance, we create an instance $I'$ of \problem{} as follows:

    \begin{itemize}
        \item We have $m' = m+3$ binary candidate proposals: $\{c_1,\ldots,c_{m}, c_{m+1}, c_{m+2}, c_{m+3}\}$, i.e., three additional proposals from $I$.
        \item We have $n$ voters corresponding to the voters of $I$, and an additional number of $n+1$ dummy voters, for a total of $2n+1$ voters.
        \item For every voter $i\in [n]$, belonging to the group of the first $n$ voters, their preferences for the first $m$ proposals in $I'$ are just as they are in $I$, and they are all revealed, so that $m_i=m$. The remaining three proposals are not visible for these voters and their intrinsic preferences are that $v_i(c_{m+1}) = 1$, $v_i(c_{m+2}) = v_i(c_{m+3}) = 0$.  
        \item For the dummy voters, none of them approve the first $m$ proposals, which are also not revealed to them. As for the last three proposals, there are exactly two dummy voters, who will be referred to as the special dummy voters, who approve all three proposals, and all three are revealed to them. All the remaining $n-1$ dummy voters approve only the proposals $c_{m+2}$ and $c_{m+3}$, which are revealed to them, whereas $c_{m+1}$ is disapproved, and also not revealed to them.
        \item We set $r = n+2$ and $\lambda=1$, i.e. we have only one dRep available. Hence we are looking for an advertised type of the dRep, so that the instrinsic score of the elected outcome is at least $n+2$.
    \end{itemize}
\begin{figure}
\centering
\resizebox{0.8\columnwidth}{!}{
\begin{tabular}{@{}ccccc@{}}
\toprule
                                                       & \textbf{m proposals}             & \textbf{$c_{m+1}$}        & \textbf{$c_{m+2}$}        & \textbf{$c_{m+3}$}                             \\ \midrule
\multicolumn{1}{|c}{\textbf{$n$ voters}}               & \cellcolor[HTML]{808080}$[\cdots]$ & \cellcolor[HTML]{C0C0C0}1 & \cellcolor[HTML]{C0C0C0}0 & \multicolumn{1}{c|}{\cellcolor[HTML]{C0C0C0}0} \\ \midrule
\multicolumn{1}{|c}{\textbf{$2$ special dummy voters}} & \cellcolor[HTML]{FFFFFF}0        & \cellcolor[HTML]{FFFFFF}1 & \cellcolor[HTML]{FFFFFF}1 & \multicolumn{1}{c|}{\cellcolor[HTML]{FFFFFF}1} \\ \midrule
\multicolumn{1}{|c}{\textbf{$n-1$ dummy voters}}       & \cellcolor[HTML]{FFFFFF}0        & \cellcolor[HTML]{C0C0C0}0 & \cellcolor[HTML]{FFFFFF}1 & \multicolumn{1}{c|}{\cellcolor[HTML]{FFFFFF}1} \\ \bottomrule
\end{tabular}%
}
    \caption{  The instance created in the proof of \Cref{thm:delegation-np-hard}. Light gray cells correspond to preferences that are not revealed to the voters and dark gray cells correspond to preferences that are derived from instance $I$ of \textsc{mav}.}
    \label{fig:hardness}
\end{figure}

An illustrative exposition of voters' ballots can be found in \cref{fig:hardness}. Before we proceed, note that the only proposal that has an intrinsic score of $n+2$ is the proposal $c_{m+1}$, while all the others have lower scores. But $c_{m+1}$ cannot be elected via only direct voting, since it is not revealed to the first $n$ voters. Hence, the question is whether there exists an advertised type for the dRep that can make $c_{m+1}$ elected.

For the forward direction, suppose that $I$ is a YES-instance of \textsc{mav}. 
Then there exists a vector $v\in \{0,1\}^{m}$ for which it holds that $\max_{i\in [n]}H(v_i,v)\leq m/2$. Consider now that the dRep advertises the type $t = (v, 1, 0, 0)$. The dRep will attract the first $n$ voters, since they agree with them in at least $m/2$ of their revealed proposals. She will not attract any of the dummy voters, all of which disagree with them on two proposals. 
Therefore, the dRep will have a weight of $n$, and since the dummy voters vote directly, the proposal $c_{m+1}$ will collect a score of n+2 and will be the winner of the election.

For the reverse direction, suppose that $I$ is a NO-instance of \textsc{mav}. Then for any possible vector $v\in \{0,1\}^{m}$, it holds that there exists at least one voter $i^*\in [n]$, such that $H(v_{i^*},v)> m/2$.
Fix now such an arbitrary vector $v\in \{0,1\}^{m}$. We will consider all possible cases for the advertised type of the dRep, $t$, and show that $c_{m+1}$ cannot get elected, i.e, $I'$ is a NO-instance. We use a natural tie-breaking rule that resolves any tie in favor of the proposal of maximum approval score in the revealed profile. Therefore, in all the cases below, any tie that involves proposals $c_{m+1},c_{m+2},c_{m+3}$ is broken against $c_{m+1}$ due to the fact that $\hat{sc}(c_{m+1})<\hat{sc}(c_{m+2})=\hat{sc}(c_{m+3})$.

\begin{itemize}
\item Suppose that $t = (v, 0, x, y)$, for any $x, y\in \{0,1\}$. It is easy to see that since the dRep does not approve $c_{m+1}$, it is not possible that this proposal wins the election.
\item Suppose that $t = (v, 1, 0, 0)$. In this case, the dRep does not attract any of the dummy voters. Hence, the proposals $c_{m+2}$ and $c_{m+3}$ receive a score of $n+1$. As for proposal $c_{m+1}$, it is crucial to note that since $I$ is a NO-instance of \textsc{mav}, $t$ can attract at most $n-1$ of the first $n$ voters. Therefore together with the two special dummy voters, the proposal $c_{m+1}$ will have a score of at most $n+1$. By the tie-breaking rule that we have assumed, this means that the winner of the election will be either $c_{m+2}$ or $c_{m+3}$.
\item Suppose that $t = (v, 1, 1, 0)$, or $t = (v, 1, 1, 1),$ or $t = (v, 1, 0, 1)$. WLOG, we analyze the former. In this case, the dRep attracts all dummy voters. Hence, the dRep has a weight of at least $n+1$ and possibly more by some of the first $n$ voters who delegate to her. Hence, the winner will either be some proposal that is approved by $v$ and also has the highest approval rate among the voters who did not delegate to dRep, or there is a tie among all the proposals approved by $t$. By the tie-breaking rule, it is not possible that $c_{m+1}$ is elected.
\end{itemize}

We have established that no matter what the dRep advertises, it is impossible that $c_{m+1}$ is elected, and hence $I'$ is a NO-instance, which concludes the proof.
\end{proof}

Importantly, we note that all of the approximation guarantees in our paper can be obtained in polynomial time.

\subsection{Multiple dReps}

We now turn our attention to the case of multiple dReps ($\lambda \geq 2)$, as this is not captured by the stark impossibility of \Cref{2prop:hard_no_coherent_noproof}. Is it perhaps possible to achieve much better approximations by using sufficiently many dReps? A reinforcing observation is that for majority agreement, 2 dReps actually suffice to elect the optimal proposal.

\begin{restatable}{theorem}{cordreps}
\label{Ncorol:02dreps}
When $\lambda= 2$, \problem\ for majority agreement can be optimally solved.\end{restatable}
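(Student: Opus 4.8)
The plan is to elect the optimum by forcing a perfect tie among all proposals in $\hat{P}_D$ and then letting the tie-breaking rule do the rest. Concretely, I would take the two dReps to be complementary ``extreme'' types: $D=\{\dRepone,\dRepallzero\}$, where $\dRepone$ advertises $1$ on every proposal and $\dRepallzero$ advertises $0$ on every proposal. The intuition is that $\dRepallzero$ contributes nothing to any proposal's score, while $\dRepone$ contributes its entire (delegated) weight \emph{uniformly} to every proposal; so if these two dReps together capture \emph{all} voters, then every proposal ends up with exactly the same score in $\hat{P}_D$, and the tie is resolved in favour of the proposal of maximum intrinsic score, which is precisely $\win(P)$.

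The first key step is to show that under majority agreement ($k=0$) every voter is attracted by at least one of the two dReps, so that no voter votes directly. For any voter $i$, the two advertised types disagree with $\hat{v}_i$ on complementary sets of revealed proposals, hence $d(i,\dRepone)+d(i,\dRepallzero)=m_i$, which gives $\min\{d(i,\dRepone),d(i,\dRepallzero)\}\le \floor{m_i/2}$. Since a voter delegates under majority agreement exactly when their distance to a dRep is at most $\floor{m_i/2}$, each voter is attracted by $\dRepone$, by $\dRepallzero$, or by both; in particular $A(D)=N$.

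The second step is the score computation. Because $A(D)=N$, the only contributions to $\hat{sc}_D(\cdot)$ come from the two dReps, whose weights are the sizes of the sets of voters delegating to them. As $\dRepallzero$ approves no proposal, it adds $0$ to every proposal, whereas $\dRepone$ approves all proposals and therefore adds its weight identically to each one. Consequently $\hat{sc}_D(j)$ is the same value (namely, the number of voters delegating to $\dRepone$) for \emph{every} $j\in\I$, so all proposals are tied; applying the tie-breaking rule, which favours the proposal of maximum intrinsic score, elects $\win(P)$, i.e. $\win(\hat{P}_D)=\win(P)$ and $sc(\win(\hat{P}_D))=\opt(P)$. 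The main subtlety to address — and the only place the ``arbitrary delegation'' assumption could bite — is a voter attracted by \emph{both} dReps, whose choice is adversarial in our positive results. Here this causes no difficulty: whichever dRep such a voter joins, its contribution is either $+1$ to \emph{all} proposals (via $\dRepone$) or $0$ to \emph{all} proposals (via $\dRepallzero$), so the adversarial choice can only shift the common score value up or down but can never break the equality across proposals. Thus the tie, and hence the optimality of the outcome, is robust to any resolution of overlapping delegations.
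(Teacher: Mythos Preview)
Your proposal is correct and follows essentially the same approach as the paper: you use the same two dReps $\dRepone$ and $\dRepallzero$, establish $A(D)=N$ via the complementarity $d(i,\dRepone)+d(i,\dRepallzero)=m_i$ (which is exactly the paper's Lemma underlying the result), and conclude by the uniform-score-plus-tie-breaking argument. Your treatment is in fact slightly more thorough than the paper's, as you explicitly address robustness to the adversarial resolution of overlapping delegations.
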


\begin{proof}
We denote by \dRepallzero\ and \dRepone\ the delegation representatives whose advertised types with respect to a proposal $j\in \I$ are as follows:
 $
 \dRepallzero(j)=0,
 \dRepone(j)=1,\, \forall j\in \I
 $.
 Recall that \dRepallzero\ differs from \dRepzero\ used e.g. in the proof of \Cref{2thm:1drep-coh}. We will make use of the following lemma, the proof of which is immediate and based on the fact that any voter's ballot either contains at least as many zeros as ones or more ones than zeros.

\begin{lemma}
\label{Nobs:01new}
In an instance of \problem, with $k=0$, each voter belongs either to $A(\dRepallzero)$ or to $A(\dRepone)$ (or both).
\end{lemma}

A direct consequence of \Cref{Nobs:01new} is that if $D=\{\dRepallzero,\dRepone\}$ then $A(D)=N$. Hence, if $\lambda= 2$, we can retrieve the optimal solution. This is because each proposal will garner precisely as much support as $\win(P)$ since no delegate representative will vote in favor of a proposal without concurrently voting in favor of $\win(P)$. Therefore, $\win(P)$ will be elected, according to the tie-breaking rule.
\end{proof}

But what about instances in which voters are more discerning, indicated by larger values of $k$?
Whether good approximations with multiple delegation representatives are achievable \emph{in general} is still to be determined. We begin with coherent instances and we provide the following generalization of \Cref{Ncorol:02dreps} specifically for this case.

\begin{restatable}{theorem}{kcoherentSimple}
\label{2lem:k_coherent-simple}
When $\lambda=2^{k+1}$, \problem\ for coherent instances and any $k\geq 0$ can be optimally solved.
\end{restatable}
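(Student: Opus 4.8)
The plan is to generalize the two-dRep construction behind \Cref{Ncorol:02dreps}: there the two ``extreme'' dReps attract everyone while forcing $\win(P)$ to collect the full weight, so it wins by tie-breaking. I will keep this as the guiding principle. Fix the coherent instance, let $R$ be the set of proposals revealed to every voter, and write $r=\floor*{\frac{|R|-k}{2}}$ for the common attraction radius, so that a dRep attracts voter $i$ exactly when its type agrees with $\hat{v}_i$ on all but at most $r$ of the proposals in $R$. The quantitative heart is a covering-code construction exhibiting $2^{k+1}$ types whose restrictions to $R$ form Hamming balls of radius $r$ that cover the whole cube $\{0,1\}^{|R|}$, and hence attract all voters. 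Concretely, fix an ordering of $R$ and index the dReps by a pair $(a,s)$ with $a\in\{0,1\}^{k}$ and $s\in\{0,1\}$: the dRep $(a,s)$ advertises $a$ on the first $k$ proposals of $R$ and the constant bit $s$ on the remaining $|R|-k$. For any voter $b$, taking $a$ to be $b$'s first $k$ coordinates and $s$ the majority bit of $b$'s last $|R|-k$ coordinates yields agreement on the first $k$ proposals and at most $\floor*{\frac{|R|-k}{2}}=r$ disagreements on the rest, so $b$ is attracted; thus $A(D)=N$. (Note $k\le |R|$, so the split is well defined, and when $k=|R|$ the balls are singletons and still cover.)

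It then remains to guarantee that $\win(P)$, rather than an inflated competitor, is elected, and here I split on whether $\win(P)\in R$. If $\win(P)\notin R$, no voter reveals $\win(P)$, so the dReps' votes on $\win(P)$ do not affect attraction at all; I therefore make every one of the $2^{k+1}$ dReps advertise $1$ on $\win(P)$ and $0$ on all other hidden proposals, leaving the covering property untouched. Since all $n$ voters delegate and every dRep supports $\win(P)$, we get $\hat{sc}_D(\win(P))=n$, the largest score any proposal can attain, so $\hat{sc}_D(\win(P))\ge \hat{sc}_D(j)$ for every $j$ (indeed strictly, as no $R$-proposal is supported by all dReps and the other hidden proposals score $0$); hence $\win(P)$ is elected by the tie-breaking rule, which favors the proposal of maximum intrinsic score.

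The remaining case $\win(P)\in R$ is where I expect the only genuine subtlety, and it turns out delegation is unnecessary there. Because $\win(P)$ is revealed to every voter in a coherent instance, $\hat{sc}(\win(P))=sc(\win(P))=\opt(P)$, while every proposal $j$ satisfies $\hat{sc}(j)\le sc(j)\le \opt(P)$; thus $\win(P)$ already attains the maximum revealed score and, carrying the maximum intrinsic score, is returned by direct voting under the tie-breaking rule, so $D=\emptyset$ suffices. The obstacle I want to flag is precisely the temptation to reuse the ``advertise $1$ on $\win(P)$'' trick uniformly: when $\win(P)\in R$, pinning that revealed coordinate to $1$ spends part of the agreement budget of the voters who oppose $\win(P)$, and a quick check (e.g.\ $|R|=4$, $k=1$, where opponents of $\win(P)$ would each demand exact agreement on the other three proposals, forcing $2^{k+2}$ distinct balls) shows $2^{k+1}$ balls can then fail to attract everyone. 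Recognizing that this case needs no dReps, together with verifying that the covering code uses exactly $2^{k+1}$ codewords at radius $r$, are the two points that close the argument.
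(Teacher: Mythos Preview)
Your proof is correct and essentially the same as the paper's: you both fix $k$ coordinates of $R$, enumerate all $2^k$ patterns there, pair each with the two constant tails $0$ and $1$ on $R\setminus S_k$, and observe that matching a voter's first $k$ bits and taking the majority bit on the tail attracts every voter; the case $\win(P)\in R$ is dispatched by direct voting in both arguments. One small point: your parenthetical ``indeed strictly, as no $R$-proposal is supported by all dReps'' does not follow under the paper's adversarial delegation model (if every voter happens to be attracted by some $s=1$ dRep, all could delegate there and a tail proposal would also reach score $n$), but this is harmless since your main argument already invokes the tie-breaking rule in favor of the proposal with maximum intrinsic score, which is exactly what the paper relies on as well.
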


\begin{proof}
   Let $R$ be the set of commonly revealed proposals to the voters of the given instance.
It is without loss of generality here to assume that $\win(P) \notin R$, or otherwise, the direct voting would result in the election of $\win(P)$. 
To create a set of dReps $D$, we fix an arbitrary set $S_k\subseteq R,$ of $k$ proposals, and
    for every possible binary vector on $S_k$, i.e., for every ${\vecc{\sigma}} \in 2^{S_k}$,
    we add to $D$ exactly two dReps, namely $t_{{\vecc{\sigma}},0}$ and $t_{{\vecc{\sigma}},1},$ advertising the following, with respect to a proposal $j$: 
\begin{linenomath*}    
\[ {\small t_{{\vecc{\sigma}},0}(j)=
\begin{cases}
    \vecc{\sigma}(j), \text{\, if } j\in S_k,\\
    1, \text{\, if }j=\opt(P),\\
    0, \text{\, otherwise.}
\end{cases}\hspace{-0.1cm}
t_{{\vecc{\sigma}},1}(j)=
\begin{cases}
    \vecc{\sigma}(j), \text{\, if } j\in S_k,\\
    1, \text{\, otherwise.}
\end{cases}}
\]
\end{linenomath*}\normalsize
To prove the statement, it is sufficient, to show that $A(D)=N$, i.e., that $D$ can attract all voters from $N$. We fix an arbitrary voter $i\in N$. Definitely, there is a vector, say ${\vecc{\sigma'}}$, that defines a pair of dReps in $D$, say $t_{{\vecc{\sigma'}},0}$ and $t_{{\vecc{\sigma'}},1}$ (henceforth denoted by $t_1$ and $t_2$), such that $i$ totally agrees in all proposals of $S_k$ both with $t_1$ and $t_2$. Formally, if for a given vector $x$ and a set of proposals $Y$, we denote by $x_{|Y}$ the projection of $x$ to the proposals in $Y$, the following holds:
\begin{linenomath*}
\begin{equation}   
\label{eq:eq1}
    \max \{H(\hat{v}_i{_{|S_k}},t_{1_{|S_k}}), H(\hat{v}_i{_{|S_k}},t_{2_{|S_k}}\}=0
\end{equation}
\end{linenomath*}
\medskip
Let $R'_i := R_i \setminus S_k = R \setminus S_k$ and for $z \in \{0,1\}$ we define $R'_i(z):=|\{j \in R'_i: \hat{v}_i=z\}|.$    
Then, either $R'_i(0)\geq R'_i(1),$ or $R'_i(1)> R'_i(0)$.
    Therefore, 
    \begin{linenomath*}
    \begin{equation}
         \label{eq:eq2}
    \min \{H(\hat{v}_i{_{|R'_i}},{t_1}_{{|R'_i}}), H(\hat{v}_i{_{|R'_i}},{t_2}_{{|R'_i}}\} \leq \floor*{\frac{|R'_i|}{2}}.
    \end{equation}
    \end{linenomath*}
    Combining \Cref{eq:eq1,eq:eq2}, we have that voter $i$ agrees either with $t_1$ or with $t_2$, in at least
\begin{linenomath*}   
\[k+\ceil*{\frac{|R'_i|}{2}}=k+\ceil*{\frac{m_i-k}{2}} \geq \ceil*{\frac{m_i+k}{2}}\] 
\end{linenomath*}    
    proposals. Consequently, every voter will delegate to a dRep from $D$ and the optimal proposal will be elected.
\end{proof}

\Cref{2lem:k_coherent-simple} provides a bound on the sufficient number of dReps required to make sure that the optimal proposal is elected and raises a question regarding positive results (both optimal and approximate) for not-necessarily-coherent instances. 
In \Cref{2multipledreps} below we provide a generalized and more refined version of this result, that relates the achievable approximation with the required number of dReps and the parameters of the instance, but does not need to assume that the instances are coherent.

\subsection{Beyond Coherent Instances}\label{sec:generalizations}

Coherence has been proven to be very useful towards achieving meaningful approximation guarantees. At what follows, we define a more refined notion, namely a quantified version of it, which provides further insights into the structure of instances and how these affect the achievable approximations. In particular, we use the notion of $(x,\delta)$-coherent sets for sets of voters that have a common set of proposals of size $x$ in their revealed sets, as well as at most $\delta$ additional proposals.

\begin{definition}\label{def:delta-coherent}
A set of voters $N'\subseteq N$ is called $(x,\delta)$-coherent if there exists a set $X\subseteq \I$ such that for every $i\in N$ the following hold: $X\subseteq R_i$, $|X|\geq x$, and $|R_i\setminus X|\leq \delta$. \end{definition}

Using \cref{def:delta-coherent}, we generalize the result of \Cref{2thm:1drep-coh}, with an additional loss in the factor that is dependent on the type of $(x,\delta)$-coherent sets that an instance admits.

\begin{restatable}{theorem}{onedrepcohdelta}
\label{2thm:1drep-coh-delta}
For a single dRep and for any $\delta\!\!\!>\!\!\!0$, \problem\ admits an approximation ratio of $\min\left\{n,\frac{3n(k+\delta+2)}{2|S|}\right\}$, where $S$ is the largest $(k+\delta,\delta)$-coherent set in the instance.
\end{restatable}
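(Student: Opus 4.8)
The plan is to mirror the proof of \Cref{2thm:1drep-coh} almost verbatim, the single conceptual change being that a $(k+\delta,\delta)$-coherent set $S$ with common proposal set $X$ (so $|X|\ge k+\delta$ and $|R_i\setminus X|\le\delta$ for all $i\in S$, as in \Cref{def:delta-coherent}) behaves, as far as attracting voters to \dRepzero{} is concerned, like a \emph{fully coherent} instance on $X$ with the inflated agreement parameter $k':=k+\delta$. Concretely, I would again let $\dRepzero$ vote $1$ on $\win(P)$ and $0$ everywhere else, reduce the theorem (using $\opt(P)\le n$ and the trivial $n$-approximation to account for the $\min$) to producing a single dRep with $sc(\win(\hat{P}_D))\ge \frac{2|S|}{3(k+\delta+2)}$, and then establish the analogue of \Cref{2lem:1coh}: restricted to the voters of $S$, either the winner of direct voting already has intrinsic score at least $\frac{2|S|}{3(k+\delta+2)}$, or $|A(\dRepzero)|\ge \frac{2|S|}{3(k+\delta+2)}$.

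First I would dispose of the easy branch as before: if some proposal has revealed score at least $\frac{2|S|}{3(k+\delta+2)}$ already among the voters of $S$, then the global direct-voting winner inherits at least that revealed score (its global revealed score dominates any within-$S$ revealed score) and hence has intrinsic score at least $\frac{2|S|}{3(k+\delta+2)}$, so $D=\emptyset$ suffices. Thus I may assume every proposal has within-$S$ revealed score strictly below $\frac{2|S|}{3(k+\delta+2)}<\frac{|S|}{k+\delta+2}$. Next I would prove the analogue of \Cref{Ncl:full agreement} by induction on $r\in[k+\delta]$, now over subsets $S_r\subseteq X$ of the \emph{common} proposals: the set $Z(S_r,\dRepzero)$ of voters of $S$ totally agreeing with \dRepzero{} on $S_r$ has size at least $\frac{(k+\delta+2-r)|S|}{k+\delta+2}$, the induction being identical to the coherent case with $n\mapsto|S|$ and $k\mapsto k+\delta$. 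Fixing $r=k+\delta$ (possible precisely because $|X|\ge k+\delta$, which is where the parameter $x=k+\delta$ is used) and a set $S_{k+\delta}\subseteq X$, I would apply \Cref{cl:1/3} to the coherent sub-instance on the voters $Z(S_{k+\delta},\dRepzero)$ and the remaining common proposals $X\setminus S_{k+\delta}$; since $|Z(S_{k+\delta},\dRepzero)|/3\ge \frac{2|S|}{3(k+\delta+2)}$, the high-revealed-score alternative is excluded, so at least a third of these voters agree with \dRepzero{} on $\lceil\frac{|X|-(k+\delta)}{2}\rceil$ proposals of $X\setminus S_{k+\delta}$, hence on at least $(k+\delta)+\lceil\frac{|X|-(k+\delta)}{2}\rceil=\lceil\frac{|X|+k+\delta}{2}\rceil$ proposals of $X$.

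The crux, and what I expect to be the main obstacle, is the final arithmetic step converting this agreement on $X$ into genuine delegation in the \emph{true} instance, where voter $i$ has $m_i=|X|+e_i$ revealed proposals with $e_i\le\delta$ and threshold $\lfloor\frac{m_i-k}{2}\rfloor$. Bounding the disagreement on $X$ by $\lfloor\frac{|X|-k-\delta}{2}\rfloor$ and worst-casing the $e_i$ extra coordinates as disagreements, one must verify
\[
\left\lfloor \frac{|X|-k-\delta}{2} \right\rfloor + e_i \;\le\; \left\lfloor \frac{|X|+e_i-k}{2} \right\rfloor,
\]
which holds since subtracting $e_i$ from both sides reduces it to $\lfloor\frac{|X|-k-\delta}{2}\rfloor\le\lfloor\frac{|X|-k-e_i}{2}\rfloor$, true because $e_i\le\delta$. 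This is the precise sense in which the $\delta$ extra proposals are ``paid for'' by enlarging the common agreement block from $k$ to $k+\delta$.

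The one place needing genuine extra care, which I would isolate as a separate case, is $\win(P)\in X$: then \dRepzero{} casts a \emph{revealed} vote on $\win(P)$, which may force an additional disagreement absent from the coherent argument and can break the above inequality by a single rounding unit in the worst case. I would handle this by routing it to the direct-voting branch whenever $\win(P)$ is well supported within $S$, and otherwise arguing separately that the lone forced coordinate is absorbed; flagging this is the honest weak point of the plan. Having secured $|A(\dRepzero)|\ge\frac{2|S|}{3(k+\delta+2)}$, the conclusion follows exactly as in \Cref{2thm:1drep-coh}: either $\win(P)$ is elected (ratio $1$), or some proposal collecting all its points from direct voters wins, and that proposal has intrinsic score at least $|A(\dRepzero)|$. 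Applying this to the largest $(k+\delta,\delta)$-coherent set $S$ yields the claimed $\min\{n,\frac{3n(k+\delta+2)}{2|S|}\}$ bound.
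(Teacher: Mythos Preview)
Your proposal is correct and follows essentially the same route as the paper's proof: the paper proves the direct analogue of \Cref{2lem:1coh} (its \Cref{Nlem:B2}) via the direct analogue of \Cref{Ncl:full agreement} (its \Cref{Ncl:full agreementdelta}) with the substitution $k\mapsto k+\delta$, and closes with the same arithmetic $(k+\delta)+\big\lceil\frac{|X|-(k+\delta)}{2}\big\rceil\ge\big\lceil\frac{m_i+k}{2}\big\rceil$ using $m_i\le|X|+\delta$. The case $\win(P)\in X$ that you flag as a weak point is not an issue: the paper disposes of it by assuming upfront that $\hat v_i(\win(P))=\bot$ for all $i$ (so $\win(P)\notin X$ automatically) and removing this assumption at the end by restricting to the voters who approve $\win(P)$, for whom a revealed $\win(P)$-coordinate is an \emph{agreement} with \dRepzero{}, never a disagreement.
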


\begin{proof}
By electing any proposal, one can straightforwardly obtain an approximation factor of $n$. For the remaining of the proof, say that $\frac{2n(k+\delta+2)}{2|S|}<n$. Additionally, for now and for ease of exposition, suppose that for every voter $i \in N$, it holds that $v_i(\win(P))=1$, $\hat{v}_i(\win(P))=\bot$. At the end of the proof we will show that this assumption can be dropped. For the studied case, we will prove an approximation ratio of $\frac{3n(k+\delta+2)}{2|S|}$.

The proof follows the same rationale as the proof of \Cref{2thm:1drep-coh}, which was based on applying \Cref{2lem:1coh} to the largest coherent set of the instance. Similarly, to prove \Cref{2thm:1drep-coh-delta}, it suffices to prove an analogous lemma, that we state below. Recall that 
\begin{linenomath*}
\[
 \dRepzero(j)=\begin{cases}
     1, \text{\,if } j=\win(P),\\
     0, \text{\,otherwise.}
 \end{cases} \!\!\!
 \dRepone(j)=1,\, \forall j\in \I.
 \]
 \end{linenomath*}

\begin{lemma}
\label{Nlem:B2}
In an instance of \problem, with $N$ being $(k+\delta, \delta)$-coherent, 
   it either holds that $sc(\win(\hat{P}))\geq \frac{2n}{3(k+\delta+2)}$ or $|A(\dRepzero)|\geq \frac{2n}{3(k+\delta+2)}.$
    \end{lemma}

\begin{proof}
Suppose that there exists a proposal $j' \in \I$ such that $\hat{sc}(j')\geq \frac{2n}{3(k+\delta + 2)}$. Then the proof follows by the fact that $sc(win(\hat{P}))\geq \hat{sc}(win(\hat{P}))\geq \hat{sc}(j')$. So we can assume that for every proposal $j\in \I$, it holds that $\hat{sc}(j)<\frac{2n}{3(k+\delta + 2)}$.
Next, we state a claim which is a generalization of \Cref{Ncl:full agreement}. Its proof is almost identical to the proof of \Cref{Ncl:full agreement}.
 
\begin{claim}
\label{Ncl:full agreementdelta}
    Consider an instance where the set of voters $N$ forms a $(k+\delta, \delta)$-coherent set, for some $\delta\geq 0$. Suppose also that $\hat{sc}(j)<\frac{2n}{3(k+\delta+2)},$ for any $j\in \I$. Let $X$ be the set of commonly revealed proposals to all voters, which by definition satisfies $|X|\geq k+\delta$. For any set $S_r\subseteq X$ of $r$ proposals, with $r\in [k+\delta]$, let $Z(S_r,\dRepzero)$ be as defined in \cref{Ncl:full agreement}. Then $|Z(S_r,\dRepzero)| \geq \frac{(k+2+\delta-r)n}{(k+\delta+2)}$.
\end{claim}

We proceed with proving \Cref{Nlem:B2}, using \Cref{Ncl:full agreementdelta}. Fix $r=k+\delta$ in \Cref{Ncl:full agreementdelta}, and let $S_r\subseteq X$ be a set of $k+\delta$ proposals.
Let also $\I'=X \setminus S_r$.
We examine the number of voters who will eventually delegate to \dRepzero.

We observe that every proposal of $\I'$ belongs to the revealed set of any voter of $Z(S_r,\dRepzero)$. Therefore, we can apply \Cref{corol:0direct-coh} for $Z(S_r,\dRepzero)$ and $\I'$.
This implies that either there exists a proposal in $\I'$ approved by $\frac{Z(S_r,\dRepzero)}{3}$ voters or at least $\frac{Z(S_r,\dRepzero)}{3}$ voters agree with \dRepzero{} on at least $\ceil{\frac{|\I'|}{2}}$ proposals.
By \Cref{Ncl:full agreementdelta}, we know that $|Z(S_r,\dRepzero)|\geq \frac{2n}{k+2+\delta}$.
Therefore, by the assumption we have made, it is not possible to have a proposal with a score of $\frac{|Z(S_r,\dRepzero)|}{3}$, hence, at least $\frac{2\cdot |Z(S_r,\dRepzero)|}{3(k+\delta+2)}$ voters of $Z(S_r,\dRepzero)$ agree with \dRepzero{} in at least $\ceil{\frac{|\I'|}{2}}$ proposals from $C'$. But, by the definition of $Z(S_r,\dRepzero)$, they also agree on the $k+\delta$ proposals of $S_r$. In total, their agreement with \dRepzero{} is at least:    
\begin{linenomath*}
\begin{equation*}
        \label{eq:eq4} k+\delta+\ceil*{\frac{|\I'|}{2}} \geq 
    k+\delta+\ceil*{\frac{m_i-k-\delta-\delta}{2}}  \geq \ceil*{\frac{m_i+k}{2}},
\end{equation*}
\end{linenomath*}  
where the first inequality holds due to the fact that $|C'| = |X| -(k+\delta)$, and $m_i\leq |X|+\delta$.
    As a consequence all these voters, i.e., at least $\frac{2n}{3(k+\delta+2)}$ in number, will delegate to \dRepzero.
\end{proof}

Hence, \Cref{Nlem:B2} holds and \Cref{2thm:1drep-coh-delta} follows, under the assumption that all voters secretly approve the optimal proposal. Suppose that in a given instance the assumption doesn't hold. Focusing on the set of voters, called $N'$, that satisfy the assumption and applying the described procedure, one can prove that either there is a proposal of score at least $\frac{2|N'|}{3(k+\delta+2)}$ or \dRepzero\ attracts at least $\frac{2|N'|}{3(k+\delta+2)}$ voters. Given that $sc(\win(P))=|N'|$, an approximation ratio of $\frac{3(k+\delta+2)}{2}$ is implied. 

To conclude the proof, we note that the desired approximation ratio holds true by focusing on the set $S$, just like in \cref{2thm:1drep-coh}.
\end{proof}

The main result of the subsection is a relaxation of \Cref{2lem:k_coherent-simple}. Unlike \Cref{2lem:k_coherent-simple}, the theorem does not require any structural assumptions, and relates the approximation with the number of dReps, the threshold bound and the structure of the instance in terms of approximate coherence.

\begin{restatable}{theorem}{multipledreps}
\label{2multipledreps}
When $\lambda=\min\left\{n,\zeta 2^{k+1}\right\}$, \problem\ admits an approximation ratio of $\frac{\gamma}{3\zeta}$, where $\gamma$ is the minimum number of $(k,m-k)$-coherent sets that can form a partition of $N$, and $\zeta\leq\gamma$ with $\zeta\in \mathbb{N}$. 
\end{restatable}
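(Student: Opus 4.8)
The plan is to combine the minimum partition of $N$ into $(k,m{-}k)$-coherent sets with the multi-dRep gadget of \Cref{2lem:k_coherent-simple} and the one-third dichotomy of \Cref{cl:1/3}. First I would fix a partition $N = N_1 \cup \dots \cup N_\gamma$ witnessing $\gamma$ (each $N_\ell$ sharing a common revealed block $X_\ell$ with $|X_\ell|\ge k$, which is exactly what $(k,m{-}k)$-coherence amounts to), and select the $\zeta$ sets of largest cardinality, so that their union $U$ satisfies $|U|\ge \tfrac{\zeta}{\gamma}n \ge \tfrac{\zeta}{\gamma}\opt(P)$. For each chosen set $N_\ell$ I fix a $k$-subset $S_k^\ell\subseteq X_\ell$ and deploy, exactly as in \Cref{2lem:k_coherent-simple}, $2^{k+1}$ dReps enumerating all binary patterns on $S_k^\ell$, for a total of $\zeta 2^{k+1}=\lambda$ dReps. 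The crucial design choice is the ``clean'' variant: every dRep votes $1$ on $\win(P)$ and on the matched ones of $S_k^\ell$, and $0$ everywhere else, so that — as in the proof of \Cref{2lem:1coh} — the $\hat{sc}_D$-mass created on any proposal other than $\win(P)$ is \emph{genuine}, i.e.\ contributed only by voters who actually approve that proposal.

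Next I would analyse each chosen set locally. Because enumerating $S_k^\ell$ guarantees that every voter of $N_\ell$ meets a dRep agreeing with it on all of $S_k^\ell$, the $(k+2)$ loss of \Cref{Ncl:full agreement} is avoided, and \Cref{cl:1/3}, applied to the sub-instance induced by $N_\ell$ and the proposals outside $S_k^\ell$, yields the clean dichotomy: either at least $\tfrac13|N_\ell|$ voters of $N_\ell$ are attracted and delegate to $\win(P)$, or some proposal receives direct support at least $\tfrac13|N_\ell|$ from within $N_\ell$. In the first alternative these delegators all land on $\hat{sc}_D(\win(P))$; in the second the witnessing proposal retains genuine support even after the majority-type voters of $N_\ell$ delegate away.

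Then I would pass to the global election. Writing $w^\ast=\win(\hat P_D)$, the clean construction ensures $sc(w^\ast)\ge \hat{sc}_D(w^\ast)$ whenever $w^\ast\neq\win(P)$ (no false inflation, since the only proposals a delegated voter can boost are ones it truly approves), and $sc(w^\ast)=\opt(P)$ otherwise; since $w^\ast$ is the argmax, $sc(w^\ast)\ge \hat{sc}_D(j)$ for every $j$, in particular for $\win(P)$ and for every locally-witnessing proposal. It then suffices to exhibit a single proposal accumulating $\hat{sc}_D$-score at least $\tfrac{\zeta}{3\gamma}\opt(P)$; combined with tie-breaking in favour of the maximum intrinsic score, this delivers the claimed factor.

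The hard part is precisely this last aggregation. The per-set guarantees of \Cref{cl:1/3} may send the ``direct-support'' sets to \emph{different} proposals, so only the delegated mass on $\win(P)$ (coming from the first-alternative sets) adds up across sets, whereas the directly-witnessed proposals compete with $\win(P)$ and with each other only through the maximum. Ensuring that the total good mass $\tfrac13|U|\ge \tfrac{\zeta}{3\gamma}\opt(P)$ is captured by $w^\ast$ rather than diluted over up to $\zeta$ distinct proposals is the main obstacle, and it is here that the genuine-support property must be played against the argmax property of $w^\ast$ (and, if needed, the freedom to re-select which $\zeta$ sets to activate toward a common target). The alternative of attracting \emph{all} of $U$ to $\win(P)$ through the full \Cref{2lem:k_coherent-simple} gadget makes $\hat{sc}_D(\win(P))=|U|$ immediately, but trades this difficulty for the dual task of ruling out false inflation by the all-ones dReps, which \Cref{prop:attraction-not-enough} warns is the genuine danger; reconciling these two routes is what I expect the proof to turn on.
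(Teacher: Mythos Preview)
The aggregation obstacle you identify is genuine, and your proposal does not resolve it: applying \Cref{cl:1/3} block by block yields, for each $N_\ell$, either $|N_\ell|/3$ delegators to $\win(P)$ or a proposal $j_\ell$ with $|N_\ell|/3$ direct supporters, but since the $j_\ell$ may all be distinct, the eventual winner is only guaranteed score $\max\big(\sum_{\ell\in A}|N_\ell|/3,\,\max_{\ell\in B}|N_\ell|/3\big)$, which can be as small as $|U|/(3\zeta)$ when all blocks fall into case~$B$ with equal sizes. Neither re-selecting which blocks to activate nor the genuine-support property repairs this loss.

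The paper avoids the problem by running the dichotomy \emph{globally} over $N'=U$ rather than per block. It first deploys the full $2^{k+1}$ dReps per block (both the $t_{\sigma,0}$ and the $t_{\sigma,1}$ families, so that all of $N'$ is attracted by $D'$), then lets $D''$ consist of only the $t_{\sigma,0}$ dReps and splits $N'$ into those attracted by $D''$ versus those attracted only by $D'\setminus D''$. If the former has size at least $|N'|/3$ one uses $D''$; otherwise at least $2|N'|/3$ voters are majority-ones on their revealed sets, and a \emph{single} pigeonhole over all their approvals produces one proposal with direct score at least $\tfrac{|N'|\beta}{3\alpha}$, where $\alpha=|\bigcup_i R_i|$ and $\beta=\min_i|R_i|$, so direct voting alone suffices in that branch. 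The stated $\gamma/(3\zeta)$ ratio then follows once the chosen blocks are actually coherent (so $\alpha=\beta$), which the paper states explicitly at the end of its proof. Two smaller points: your ``clean variant'' as described has only $2^k$ dRep types per block (you have dropped the $t_{\sigma,1}$ family), not $2^{k+1}$, and it is precisely the presence of the $t_{\sigma,1}$ dReps that certifies the majority-ones property in the global split; also, \Cref{cl:1/3} is stated for \emph{coherent} sub-instances, which $(k,m{-}k)$-coherence of $N_\ell$ does not provide, so your local invocation would need its own justification in any case.
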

\begin{proof}
For ease of exposition we assume that for every voter $i \in N$, it holds that $v_i(\win(P))=1$, $\hat{v}_i(\win(P))=\bot$. We note that this assumption is without any loss in the approximation factor, likewise in the proof of \cref{2thm:1drep-coh-delta} in which we proved that focusing only on the set of voters that are secretly in favor of the optimal proposal does not affect the approximation ratio.

        Trivially, if we set $D$ to consist of one dRep $t_i$ for every voter $v_i$ such that 
\begin{linenomath*}
        \[t_i(j)=
        \begin{cases}
            1\,, \text{if } j= \win(P) \\
            v_i(j)\,, \text{otherwise}
        \end{cases}
        \]
\end{linenomath*}
    then $A(D)=N$ and hence $\win(\hat{P}_D)=\win(P)$. Therefore, with $n$ dReps we can retrieve the optimal solution and, as a consequence, the claimed approximation factor holds. 
    We will now proceed with proving that whenever $\zeta 2^{k+1}<n$, we can use a set of dReps $D$, where $|D|\leq \zeta 2^{k+1}$, to elect a proposal $j$ such that $sc(j)\geq \frac{3\zeta}{\gamma}\opt(P).$ 

We proceed with the following lemma, the proof of which is analogous to the proof of \Cref{2lem:k_coherent-simple}.

\begin{lemma}
\label{2lem:k_dcoherent}
In an instance of \problem, in which $N$ can be partitioned in at most $\gamma~ (k,m-k)$-coherent sets, \problem{} can be optimally solved if $\lambda= \gamma 2^{k+1}$. 
\end{lemma}

Using $2^{k+1}$ dReps for any of the sets described in the statement of the theorem, as indicated by  \Cref{2lem:k_dcoherent}, we can create a set of dReps $D$ such that $|D|\leq \gamma 2^{k+1}$ and $A(D)=N$. This proves the statement of the theorem for $\zeta = \gamma$. Fix now a number $\zeta < \gamma$. Among the $\gamma$ sets produced by \Cref{2lem:k_dcoherent}, let us focus on the $\zeta$ sets of largest size. Let also $N'\subseteq N$ be the voters in these sets. Then we can create a set of dReps $D'\subseteq D$, such that $|D'|\leq \zeta 2^{k+1}$ and $A(D')\supseteq N'$. But then, $|A(D')|\geq |N'|$. 

Although all dReps of $D'$ vote in favor of $\win(P)$ and this can indeed be the winning proposal of profile $P_{D'}$, it may also be the case that $\win(P_{D'})\neq \win(P)$. In fact, if some dReps from $D'$ that are voting in favor of $\win(P_{D'})$, attract voters that vote against $\win(P_{D'})$, the intrinsic score of $\win(P_{D'})$ may significantly differ from $sc(\win(P_{D'}))$. To avoid this behaviour we suggest to create a new set of dReps, say $D''$, by deleting from $D'$ all dReps of the form $t_{\sigma,1}$, i.e., all dReps that are voting in favor of proposals that do not belong in $S_k$, as defined in the proof of \cref{2lem:k_coherent-simple}. Interestingly $|D''|=\frac{|D'|}{2}$. We will now compute $|A(D'')|$. 

Suppose first that the number of voters attracted by dReps in $D'\setminus D''$ is at most $\frac{2|N'|}{3}$. Then, $|A(D'')|\geq |N'|-\frac{2|N'|}{3}=\frac{|N'|}{3}$. 

Before continuing we define $\alpha:=|\cup_{i \in \tilde{N}}R_i|$ and $\beta := \min\{|R_i|, i\in \tilde{N}\}$, where $\tilde{N}$ is any of the $\gamma$ (k,m-k)-coherent sets of the voters' partition, as described in the statement of the Theorem.

We call $N_1$ the number of voters, attracted by dReps in $D'\setminus D''$ and say that $N_1\geq \frac{2|N'|}{3}$. But then, for every voter $i\in N_1$ it holds that $R_i(1)\geq \frac{|R_i|}{2}\geq \frac{\beta}{2}$. But then, the total number of approvals in $\hat{P}$ for proposals in $\I \setminus S_k$, equals       $\sum_{i\in N_1}R_i(1)\geq |N_1|\frac{\beta}{2} \geq      
        \frac{2|N'|}{3}\frac{\beta}{2}=\frac{|N'|\beta}{3}.$ All these approvals are spread between $\alpha$ proposals. Therefore, there exists a proposal $j \in R$ for which $\hat{sc}(j)\geq \frac{|N'|\beta}{3\alpha}$.

Consequently, either there is a proposal approved by $\frac{|N'|\beta}{3\alpha}$ voters or there exists a set of dReps $D''$ that can attract $\frac{|N'|\beta}{3\alpha}$ voters without voting in favor of a proposal disapproved by a voter in $A(D'')$. Using the fact that 
$|N'|\geq \frac{\zeta}{\gamma}n\geq \frac{\zeta}{\gamma}\opt(P)$, we have an approximation ratio of $\frac{\gamma}{\zeta}\cdot\frac{\beta}{3\alpha}$. Obviously, if every of the considered $\zeta$, out of the $\gamma$ (k,m-k)-coherent sets, is coherent, then an approximation ratio of $\frac{\gamma}{3 \zeta}$ is implied.
\end{proof}

An interesting corollary of \cref{2multipledreps} (which could also be deduced from the proof of \cref{2lem:k_coherent-simple}) is the following: When aiming for an optimal solution with $2^{k+1}$ dReps, it's not a necessity for instances to be coherent; rather, the key factor is the existence of a set of $k$ proposals commonly revealed to all voters.

We conclude the discussion on generalizations with the following theorem, for the case of majority agreement ($k=0$), which generalizes \Cref{corol:0direct-coh} by relating the achievable approximation to the structure of the revealed sets, once again without the requirement of coherence.

\begin{restatable}{theorem}{zerodirect}
    \label{Nlem:0direct}
For a single dRep, \problem\ for majority agreement admits an approximation ratio of $\min\big\{n,\frac{3\alpha}{\beta}\big\}$, where $\alpha:=|\cup_{i \in N}R_i|$ and $\beta := \min\{|R_i|, i\in N\}$. 
\end{restatable}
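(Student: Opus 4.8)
The plan is to prove \Cref{Nlem:0direct} by reusing the dRep $\dRepzero$ (the one voting $1$ only on $\win(P)$ and $0$ everywhere else) and showing that either direct voting already yields a proposal of score at least $\frac{\beta}{3\alpha}\cdot n$, or $\dRepzero$ attracts at least $\frac{\beta}{3\alpha}\cdot n$ voters—in either case electing a proposal whose intrinsic score is at least $\frac{\beta}{3\alpha}\opt(P)$, giving the $\frac{3\alpha}{\beta}$ factor. As in \Cref{2thm:1drep-coh}, the trivial factor of $n$ is obtained by electing any proposal, so I may assume $\frac{3\alpha}{\beta}<n$. First I would reduce to the case where every voter secretly approves $\win(P)$ and has $\win(P)\notin R_i$: restricting to the subset $N'$ of such voters (so $\opt(P)=|N'|$) does not hurt the approximation, exactly by the argument used at the end of the proof of \Cref{2thm:1drep-coh-delta}.

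Next I would set up the dichotomy. Assume for contradiction that no proposal has revealed score reaching $\frac{\beta n}{3\alpha}$, i.e. $\hat{sc}(j)<\frac{\beta n}{3\alpha}$ for all $j$, and show $\dRepzero$ must then attract at least $\frac{\beta n}{3\alpha}$ voters. The key counting step is to look at the total number of $1$-entries in the revealed profile restricted to $\I\setminus\{\win(P)\}$. Since each voter reveals at least $\beta$ proposals (by definition of $\beta=\min_i |R_i|$, and recalling $\win(P)$ is unrevealed so it does not eat into this count), a voter who agrees with $\dRepzero$ on at least half of their revealed proposals—i.e. has at least $\lceil |R_i|/2\rceil$ zeros—is precisely a voter $\dRepzero$ attracts under majority agreement. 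Conversely, a voter \emph{not} attracted casts more than half $1$'s, hence strictly more than $\beta/2$ approvals spread over the at most $\alpha=|\cup_i R_i|$ proposals of the union of revealed sets.

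The decisive estimate is therefore a double count: if fewer than $\frac{\beta n}{3\alpha}$ voters are attracted by $\dRepzero$, then more than $n-\frac{\beta n}{3\alpha}$ voters each contribute more than $\beta/2$ approvals among at most $\alpha$ proposals, forcing some proposal to collect revealed score exceeding $\frac{1}{\alpha}\bigl(n-\frac{\beta n}{3\alpha}\bigr)\frac{\beta}{2}$. I would show this quantity is at least $\frac{\beta n}{3\alpha}$ (this is where the constant $3$ is tuned: the bound $\frac{\beta}{2\alpha}(1-\frac{\beta}{3\alpha})n\ge \frac{\beta n}{3\alpha}$ should hold using $\beta\le\alpha$), contradicting the assumption that all revealed scores are below $\frac{\beta n}{3\alpha}$. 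This contradiction forces $|A(\dRepzero)|\ge\frac{\beta n}{3\alpha}$, and since $\dRepzero$ is a single dRep voting in favor of $\win(P)$, the proposal $\win(P)$ then receives enough approval to be elected (ties broken toward maximum intrinsic score as per the best-case assumptions), yielding intrinsic score $\ge\frac{\beta n}{3\alpha}\ge\frac{\beta}{3\alpha}\opt(P)$.

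I expect the main obstacle to be the precise bookkeeping in this averaging argument: correctly accounting for the fact that $\beta$ is a \emph{minimum} revealed-set size while the approvals are spread over the \emph{union} $\alpha$, and verifying that the $\frac{1}{3}$ factor survives after one subtracts the contribution of the (few) attracted voters. A subtle point to get right is the relationship to \Cref{cl:1/3}, which the present lemma is meant to supply via its proof (the forward reference in \Cref{2thm:1drep-coh}); I would make sure the argument here, specialized to $\alpha=\beta$ for coherent instances, exactly reproduces the $1/3$ claim stated there, namely that either a third of the voters agree with $\dRepzero$ on at least half the proposals or direct voting elects a proposal of score at least a third. Care is also needed with floors and ceilings in the majority-agreement threshold $\lfloor m_i/2\rfloor$ versus the ``at least half the proposals'' phrasing, to ensure the attracted/not-attracted split is clean.
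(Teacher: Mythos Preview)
Your approach is correct and arguably more streamlined than the paper's. The paper does not argue by direct contradiction on the size of $A(\dRepzero)$ as you do; instead it introduces a second candidate dRep type, $\dRepone$ (voting $1$ on every proposal), and first proves a covering lemma (\cref{Nobs:01}): every voter in $N'$ is attracted by at least one of $\dRepzero$ and $\dRepone$. It then case-splits on the size of the set $N_1$ attracted by $\dRepone$ versus the threshold $\tfrac{2|N'|}{3}$: if $|N_1|\ge \tfrac{2|N'|}{3}$, the same averaging you propose (each such voter contributes at least $\beta/2$ approvals over at most $\alpha$ proposals) yields a revealed score of at least $\tfrac{|N'|\beta}{3\alpha}$; otherwise the covering lemma forces $\dRepzero$ to attract at least $|N'|/3$ voters.

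Your route collapses this structure: you directly observe that any voter \emph{not} attracted by $\dRepzero$ must have a strict majority of $1$'s in its revealed ballot (precisely the property that would make it attracted by $\dRepone$), and apply pigeonhole without ever naming $\dRepone$ or the covering lemma. Your attraction threshold $\tfrac{\beta|N'|}{3\alpha}$ is weaker than the paper's $|N'|/3$, but this costs nothing for the stated ratio and makes the dichotomy symmetric; the verification $\tfrac{1}{2}\bigl(1-\tfrac{\beta}{3\alpha}\bigr)\ge \tfrac{1}{3}\iff \beta\le\alpha$ is correct. One small point to tighten: you conclude that under $D=\{\dRepzero\}$ the proposal $\win(P)$ is elected, but in principle a distinct $j$ could win; the fix is immediate since $\dRepzero$ votes $0$ on every $j\neq\win(P)$, so $sc(j)\ge \hat{sc}_D(j)\ge\hat{sc}_D(\win(P))\ge \tfrac{\beta|N'|}{3\alpha}$ in that case as well---the paper makes this argument explicit. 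Your specialization to $\alpha=\beta$ does reproduce \cref{cl:1/3} as needed for the forward reference in \cref{2thm:1drep-coh}.
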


\begin{proof}
It is apparent that any instance of \problem, is approximable within a factor of $n$, since it holds that electing arbitrarily a proposal $j\in C$ results in a winning proposal that satisfies $sc(j)\geq 1\geq \frac{1}{n}\opt(P)$. 
Therefore suppose that $\min\{n,\frac{3\alpha}{\beta}\}=\frac{3\alpha}{\beta}$. 

We proceed by showing that there exists a ballot type, the advertisement of which, results to an attraction of at least half of the voters that vote in favor of the optimal proposal. Recall that 
\begin{linenomath*}
 \[
 \dRepzero(j)=\begin{cases}
     1, \text{\,if } j=\win(P),\\
     0, \text{\,otherwise.}
 \end{cases} \!\!\!
 \dRepone(j)=1,\, \forall j\in \I.
 \]
 \end{linenomath*}
 
\noindent Note that \dRepzero\ and \dRepone\ are not the ``all-1-s'' and the ``all-0-s'' dReps, but rather they both vote in favor for the optimal proposal. Therefore, the proof of the lemma, although not involved, is not immediate, in contrast to the proof of \cref{Nobs:01new}. 

\begin{lemma}
\label{Nobs:01}
In an instance of \problem, with $k=0$, it holds that $\max\{|A(\dRepzero)|,|A(\dRepone)|\}\geq \frac{|N'|}{2},$ where $N'=\{i \in N: v_i(\win(P))=1\}.$
\end{lemma}

\begin{proof}[Proof of \cref{Nobs:01}] 
We will prove that any arbitrary voter $i\in N'$ belongs to at least one of $A(\dRepzero)$ and $A(\dRepone)$.  
We focus on the proposals $R'_i=R_i \setminus \{\win(P)\}$. Let $R'_i(0)=|\{j \in R'_i: \hat{v}_i(j)=0\}|$, and similarly, $R'_i(1)=|\{j \in R'_i: \hat{v}_i(j)=1\}|$. 
Then, trivially, it holds that either $R'_i(0)\geq R'_i(1)$ or $R'_i(1) > R'_i(0)$. If $\win(P) \in R_i$, voter $i$ agrees with both \dRepzero{}\, and \dRepone\, on $\win(P)$, whereas if $\win(P) \not\in R_i$, by the definition of the distance function, $\win(P)$ does not affect the distance of voter $i$ from any dRep. Hence in both cases, proposal $\win(P)$ does not contribute to the distance of $i$ from the two dReps, and therefore,  
\begin{linenomath*}
\[\min\{d(i,\dRepzero),d(i,\dRepone)\}\leq 
0+\left\lfloor \frac{|R_i'|}{2} \right\rfloor 
\leq 
\left\lfloor \frac{|R_i|}{2} \right\rfloor,\] 
\end{linenomath*}
where the minimum is achieved by \dRepzero{} if $R'_i(0)\geq R'_i(1)$, and by \dRepone{} otherwise. 
\end{proof}

    From \Cref{Nobs:01} it holds that either $\dRepzero$ or $\dRepone$ can attract at least half of the voters in $N'$. If this holds for $D=\{\dRepzero\}$, we are done, because either the optimal proposal wins the election, via $\dRepzero$, or another proposal wins through the voters who vote directly, in which case the intrinsic score of the winning proposal $j$ is at least $\hat{sc}_D(j)\geq A(D) \geq \frac{|N'|}{2}\geq \frac{|N'|}{3} \geq \frac{|N'|\beta}{3\alpha}= \frac{\beta}{3\alpha}\opt(P)$. 
    
    Otherwise, if \dRepone{} attracts at least half of the voters in $N'$, say that all voters from a set $N_1\subseteq N$ are being attracted by $\dRepone$. Then, for every voter $i\in N_1$, let    
    $R_i(0)=|\{j \in R_i: \hat{v}_i=0\}|$ and $R_i(1)=|\{j \in R_i: \hat{v}_i=1\}|$.    
    It should hold that $R_i(1)>R_i(0)$, due to the fact that $i\in A(\dRepone)$.
    We distinguish two cases:
    \begin{itemize}
        \item Suppose that $|N_1|\geq \frac{2|N'|}{3}$, For every voter $i \in N_1$ it holds that $R_i(1)\geq \frac{|R_i|}{2}\geq \frac{\beta}{2}$. But then, the total number of approvals in $\hat{P}$ equals       $\sum_{i\in N_1}R_i(1)\geq |N_1|\frac{\beta}{2} \geq      
        \frac{2|N'|}{3}\frac{\beta}{2}=\frac{|N'|\beta}{3}.$ All these approvals are spread between $\alpha$ proposals. Therefore, there exists a proposal $j \in R$ for which $\hat{sc}(j)\geq \frac{|N'|\beta}{3\alpha}$.
        \item Suppose now that $|N_1|<\frac{2|N'|}{3}$. But in this case, the set of voters in $N'\setminus N_1$ should be attracted by $\dRepzero$, again by \Cref{Nobs:01}. In total, these are at least $|N'|-\frac{2|N'|}{3}=\frac{|N'|}{3}$. Therefore, if $D=\{\dRepzero\}$, the delegation representative will vote with a weight of at least $\frac{|N'|}{3}$, only in favor of $\win(P)$, which would result in $\hat{sc}_D(\win(P))\geq \frac{|N'|}{3}$. Then, either $\win(P)$ will win the election achieving an optimal solution,
        or a proposal $j\in \I\setminus\{\win(P)\}$ will win. In the later case, it should hold that $\hat{sc}_D(j)\geq \hat{sc}_D(\win(P)) \geq \frac{|N'|}{3} \geq \frac{|N'|\beta}{3\alpha}$. However, $\dRepzero(j)=0$ and hence $\hat{sc}_D(j)=\hat{sc}(j)$.
    \end{itemize}
    We proved that whenever $\win(\hat{P}_D)\neq \win(P)$, there exists a proposal $j\in \I$ such that $\hat{sc}(j) \geq \frac{|N'|\beta}{3\alpha}\geq  \frac{\beta }{3\alpha}\cdot \opt(P)$. The fact that $sc(j)\geq \hat{sc}(j),$ concludes the proof.
\end{proof}

\section{Experiments}\label{sec:experiments}
We complement our theoretical results with experiments on the performance of voting with proxies on realistic data sets where voters exhibit incomplete preferences. We highlight the effect that the number of dReps, revealed set sizes and thresholds have on the total number of voters who delegate instead of voting directly, as well as the approximation compared to the optimal outcome.

The first hurdle to overcome is that, by definition, it is not possible to find datasets containing both revealed and intrinsic preferences, since voters only submit the first of the two. We use the MovieLens dataset \citep{harper2015movielens} to circumvent this issue, as there is enough contextual information to calculate plausible intrinsic preferences given the revealed ones. This set contains the reviews (with scores ranging from 0.5 to 5.0) given by 162.541 users to 62.423 movies. Of course, not every user has reviewed every movie. Each movie is characterized by a set of \emph{genres} and also has a relevance score for 1.084 different \emph{tags}, provided by the users. Using this information we calculate a plausible intrinsic user-specific ``rating'' $\texttt{Rat}_i$ for any non-reviewed movie $i$: 
\begin{linenomath*}
\begin{equation*}\texttt{Rat}_i = \frac{\sum_{j \in \mathcal{R}} \texttt{Rat}_j \cdot \texttt{sim}(i,j)}{\sum_{j \in \mathcal{R}} \texttt{sim}(i, j)},\end{equation*}
\end{linenomath*}
where $\texttt{sim}$ is a similarity metric (taking into account tag and genre relevance) and $\mathcal{R}$ is the set of movies reviewed by the user, hence $\texttt{Rat}_j$ is the rating that user gave to movie $j$. 
Note that these movie ratings are between 0.5 and 5.0, and are then converted to approval preferences by comparing them with the average rating given by that user. 

To calculate $\texttt{sim}(\cdot, \cdot)$ we use two vectors per movie $i$:
\begin{itemize}
    \item The \emph{tag} vector: $\texttt{tag}_i \in [0,1]^{1084}.$
    \item The \emph{genre} vector: $\texttt{genre}_i \in \{0, 1\}^{20}$.
\end{itemize}
The actual similarity is given by 
\[
\texttt{sim}(i, j)= 1.2^{\|\texttt{tag}_i - \texttt{tag}_j\|} \cdot \left(0.5 + \frac{\texttt{genre}_i \cdot \texttt{genre}_j}{\|\texttt{genre}_i\|\cdot \|\texttt{genre}_j\|}\right).
\]
In addition, we take a random sample of users and movies such that each user has reviewed at least 5\% of the movies and no movie has been reviewed by more than 10\% of the users. 
Specifically, we begin by sampling a large subset of 13000 users and 150 movies, and filter out the users that have reviewed fewer than 5\% of those movies, as well as the movies that have been reviewed by more than 10\% of the users. This makes the effect of delegation (and using multiple dReps in particular) more pronounced, as there are few completely uninformed users that would readily delegate and no clear pick for a best movie.
To enable meaningful comparisons, we add a movie that has not been reviewed by any user, but is approved by all in their intrinsic ratings.

\begin{figure*}[t!]
	 \centering
	\scalebox{0.96}{\input{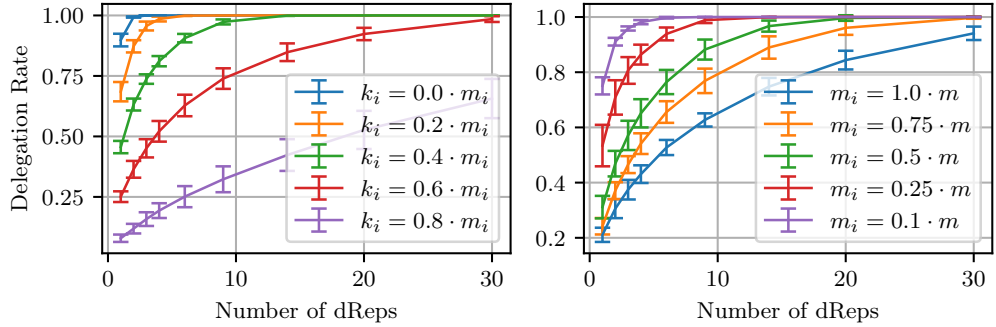}}  

 	\caption{  \normalfont{{The fraction of voters that chose to delegate as a function of the total number of dReps.}}}
   \label{fig:clustering}
\end{figure*}

\begin{figure*}[t!]
	\centering\vspace{-0.7cm}
	\scalebox{0.96}{\input{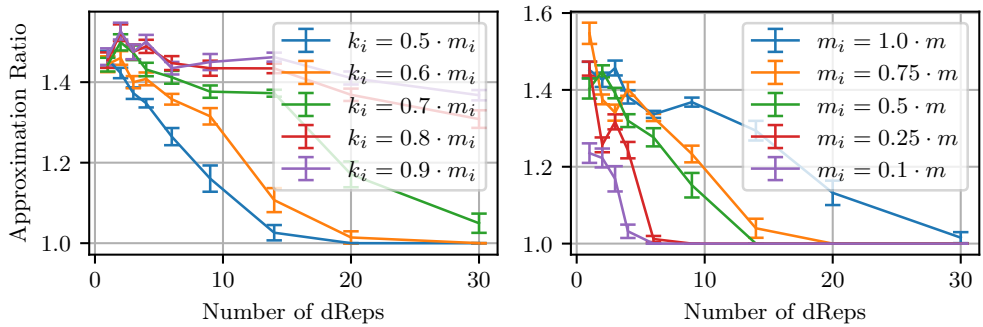}}  

	\caption{   \normalfont{{The quality of the approximation as a function of the total number of dReps. }}}
 \label{fig:approximation}
\end{figure*}

Even though in theory finding the optimal set of dReps is computationally intractable (see \Cref{thm:delegation-np-hard}), we use a greedy heuristic, which works well in practice. Specifically, we can build the type of a dRep $t$ incrementally, proposal by proposal, setting $t(j) = 1$ or $t(j) = 0$ depending on which attracts the most voters, assuming that the voters' revealed sets only include proposals up to proposal $j$.
We can repeat this procedure to create multiple dReps, removing the users that have already delegated at the end of every iteration.

We first measure the \emph{fraction} of users that opted to delegate, as a function of the number of dReps and either their $k_i$ (for fixed $m_i$) (\Cref{fig:clustering}, left) or their $m_i$ (\Cref{fig:clustering}, right). In the first case each $k_i$ ranges from $0$ up to $0.4 \, m_i$, indicating users increasing levels of user agreement required before delegating. In the second, each $m_i$ ranges from $0$ to $0.8 \, m$, capturing increasingly detailed revealed user preferences (while keeping $k_i = 0.2\, m_i$). To obtain these $m_i$s we start from the calculated intrinsic preferences (of size $m$) and we then ``hide'' some coordinates, uniformly at random, yielding the sparser revealed preferences. Our results show that it is easier to attract delegation from voters with smaller revealed sets (leading to smaller coherent sets) or with lower $k_i$'s. It turns out, that this also does indeed translate to better approximations of the optimal approval score (see \Cref{fig:approximation}; created similarly to \Cref{fig:clustering}). These results suggest that, interestingly, the case of coherent instances studied in many of our theoretical results seems to be the most challenging in practice. 

Notice that while the graphs in \Cref{fig:clustering,fig:approximation} are qualitatively similar, there are certain important differences. Specifically, the clearly diminishing returns structure observed when the only objective is to attract delegation in \Cref{fig:clustering} is not present in \Cref{fig:approximation}. This is because while it gets progressively more difficult to ``cluster'' voters, the quality of the outcomes increases in ever bigger jumps: at the very top, the difference between the best and second-best proposals (movies) will be greater than the second-best and third-best and so on, until their quality plateaus. This is also why the ``jumps'' in the approximation are steeper in some regions: a certain number of dReps does not have any effect (because even though they attract delegation, they cannot change the winner), but then suddenly this changes.

Each run is repeated 20 times for a more accurate empirical mean and standard deviation. All experiments ran on a 2021 M1 Pro MacBook with 16 gigabytes of memory and 10 high-performance cores. The code (available at \url{https://www.plazos.me/code/DelegatedVoting/}
is parallelized and requires a few hours to complete, including some initial preprocessing of the data.

\section{Discussion and Future Directions}
\label{sec:disc}
In this paper, we proposed and studied a model for proxy voting where the (less-informed) voters delegate their votes to the (fully-informed) proxies (dReps), once a certain agreement between their ballots is reached, or they vote directly otherwise. Our findings encompass essential insights into comprehending what is possible (potential) and what is not (limitations) in this setting. By identifying structural properties and other restrictions, we managed to escape the strong impossibilities that we had established.

The upper and lower bounds presented in our work are not always tight, and future work could focus on sharpening these bounds. Perhaps more interesting is the migration from the ``best-case scenario'' setting that we study in this work. This would most probably entail the following two components:

\begin{itemize}[leftmargin=*,topsep=0.2cm]
\item[-] \emph{An information model for the dReps}. It would be reasonable to assume that each delegate representative is correctly-informed about each voter $i$'s approval preference of each proposal $j$ with some probability $p_{ij}$, or that they are (perfectly or imperfectly) informed about a randomly-chosen set of proposals for each voter. 
\item[-]  \emph{A rationality model for the dReps}. Delegate representatives might not have any incentives to coordinate towards the socially-desirable outcome, and they would need to be properly incentivized to do that, e.g., via the form of payments. Additionally, dReps could even have their own preferences regarding the proposals under consideration.
\end{itemize}

One could think of many other examples or refinements of the above, and the appropriate choice of information/rationality model for the dReps would depend on the application at hand. 
Tie-breaking rules that do not necessarily favor the optimal proposal also worth studying. 
Regardless of these choices however, the results of the ``best-case scenario'' should be the starting point of any investigation into those settings. 

Besides those extensions, other directions that we see as promising routes for further research on the topic include different distance metrics, different voting rules, the multi-winner setting, elections on interdependent proposals, as well as a partial delegation setting where voters can opt to delegate only on proposals for which they have no opinion.

\newpage
\bibliography{sn-bibliography.bib}

\begin{thebibliography}{34}
\providecommand{\natexlab}[1]{#1}
\providecommand{\url}[1]{{#1}}
\providecommand{\urlprefix}{URL }
\providecommand{\doi}[1]{\url{https://doi.org/#1}}
\providecommand{\eprint}[2][]{\url{#2}}
 \bibcommenthead

\bibitem[{Alouf{-}Heffetz et~al(2022)Alouf{-}Heffetz, Bulteau, Elkind, Talmon,
  and Teh}]{A-HB+22}
Alouf{-}Heffetz S, Bulteau L, Elkind E, et~al (2022) Better collective
  decisions via uncertainty reduction. In: Proceedings of the 31st
  International Joint Conference on Artificial Intelligence (IJCAI), pp 24--30

\bibitem[{Anshelevich et~al(2021)Anshelevich, Filos-Ratsikas, Shah, and
  Voudouris}]{anshelevich2021distortion}
Anshelevich E, Filos-Ratsikas A, Shah N, et~al (2021) Distortion in social
  choice problems: The first 15 years and beyond. In: Proceedings of the 30th
  International Joint Conference on Artificial Intelligence (IJCAI) Survey
  Track., pp 4294--4301

\bibitem[{Ayadi et~al(2019)Ayadi, Amor, Lang, and Peters}]{ayadi}
Ayadi M, Amor NB, Lang J, et~al (2019) Single transferable vote: Incomplete
  knowledge and communication issues. In: Proceedings of the 18th International
  Conference on Autonomous Agents and MultiAgent Systems (AAMAS), pp 1288--1296

\bibitem[{Aziz and Shah(2021)}]{aziz2021participatory}
Aziz H, Shah N (2021) Participatory budgeting: Models and approaches. In:
  Pathways Between Social Science and Computational Social Science: Theories,
  Methods, and Interpretations. Springer, p 215--236

\bibitem[{Baumeister and Dennisen(2015)}]{baumeister}
Baumeister D, Dennisen S (2015) Voter dissatisfaction in committee elections.
  In: Proceedings of the 14th International Conference on Autonomous Agents and
  MultiAgent Systems (AAMAS), pp 1707--1708

\bibitem[{Becker et~al(2021)Becker, D’angelo, Delfaraz, and Gilbert}]{becker}
Becker R, D’angelo G, Delfaraz E, et~al (2021) Unveiling the truth in liquid
  democracy with misinformed voters. In: Proceedings of the 7th International
  Conference on Algorithmic Decision Theory (ADT), pp 132--146

\bibitem[{Bouveret et~al(2010)Bouveret, Endriss, and Lang}]{bouveret}
Bouveret S, Endriss U, Lang J (2010) Fair division under ordinal preferences:
  Computing envy-free allocations of indivisible goods. In: Proceedings of the
  19th European Conference on Artificial Intelligence (ECAI), pp 387--392

\bibitem[{Brill et~al(2016)Brill, Freeman, and Conitzer}]{brill}
Brill M, Freeman R, Conitzer V (2016) Computing possible and necessary
  equilibrium actions (and bipartisan set winners). In: Proceedings of the 30th
  AAAI Conference on Artificial Intelligence (AAAI), pp 418--424

\bibitem[{Cabannes(2004)}]{cabannes2004participatory}
Cabannes Y (2004) Participatory budgeting: a significant contribution to
  participatory democracy. Environment and urbanization 16(1):27--46

\bibitem[{Caragiannis and Micha(2019)}]{caragiannis2019contribution}
Caragiannis I, Micha E (2019) A contribution to the critique of liquid
  democracy. In: Proceedings of the 28th International Joint Conference on
  Artificial Intelligence (IJCAI), pp 116--122

\bibitem[{Cevallos and Stewart(2021)}]{cevallos2021verifiably}
Cevallos A, Stewart A (2021) A verifiably secure and proportional committee
  election rule. In: Proceedings of the 3rd ACM Conference on Advances in
  Financial Technologies, pp 29--42

\bibitem[{Cohensius et~al(2017)Cohensius, Mannor, Meir, Meirom, and
  Orda}]{CMMMO17}
Cohensius G, Mannor S, Meir R, et~al (2017) Proxy voting for better outcomes.
  In: Proceedings of the 16th Conference on Autonomous Agents and MultiAgent
  Systems (AAMAS), pp 858--866

\bibitem[{Constantinescu and Wattenhofer(2023)}]{constantinescu}
Constantinescu A, Wattenhofer R (2023) Computing the best policy that survives
  a vote. In: Proceedings of the 22nd International Conference on Autonomous
  Agents and Multiagent Systems (AAMAS), pp 2058--2066

\bibitem[{Elkind et~al(2015)Elkind, Faliszewski, Lackner, and
  Obraztsova}]{elkind}
Elkind E, Faliszewski P, Lackner M, et~al (2015) The complexity of recognizing
  incomplete single-crossing preferences. In: Proceedings of the 29th AAAI
  Conference on Artificial Intelligence (AAAI), pp 865--871

\bibitem[{Frances and Litman(1997)}]{frances}
Frances M, Litman A (1997) On covering problems of codes. Theory Computing
  Systems 30(2):113--119

\bibitem[{Fritsch and Wattenhofer(2022)}]{fritsch2022price}
Fritsch R, Wattenhofer R (2022) The price of majority support. In: Proceedings
  of the 21st International Conference on Autonomous Agents and MultiAgent
  Systems (AAMAS), pp 436--444

\bibitem[{Halpern et~al(2023{\natexlab{a}})Halpern, Halpern, Jadbabaie, Mossel,
  Procaccia, and Revel}]{halpern2021defense}
Halpern D, Halpern JY, Jadbabaie A, et~al (2023{\natexlab{a}}) In defense of
  liquid democracy. In: Proceedings of the 24th ACM Conference on Economics and
  Computation (EC), p 852

\bibitem[{Halpern et~al(2023{\natexlab{b}})Halpern, Kehne, Procaccia,
  Tucker-Foltz, and W{\"u}thrich}]{halpern}
Halpern D, Kehne G, Procaccia AD, et~al (2023{\natexlab{b}}) Representation
  with incomplete votes. In: Proceedings of the 37th AAAI Conference on
  Artificial Intelligence (AAAI), pp 5657--5664

\bibitem[{Harper and Konstan(2015)}]{harper2015movielens}
Harper FM, Konstan JA (2015) The movielens datasets: History and context. ACM
  Transactions on Interactive Intelligent Systems 5(4):1--19

\bibitem[{Imber et~al(2022)Imber, Israel, Brill, and Kimelfeld}]{Imber}
Imber A, Israel J, Brill M, et~al (2022) Approval-based committee voting under
  incomplete information. In: Proceedings of the 36th AAAI Conference on
  Artificial Intelligence (AAAI), pp 5076--5083

\bibitem[{Kahng et~al(2021)Kahng, Mackenzie, and Procaccia}]{kahng2021liquid}
Kahng A, Mackenzie S, Procaccia A (2021) Liquid democracy: An algorithmic
  perspective. Journal of Artificial Intelligence Research 70:1223--1252

\bibitem[{Kalech et~al(2011)Kalech, Kraus, Kaminka, and Goldman}]{kalech}
Kalech M, Kraus S, Kaminka GA, et~al (2011) Practical voting rules with partial
  information. In: Proceedings of the 10th International Conference on
  Autonomous Agents and MultiAgent Systems (AAMAS), pp 151--182

\bibitem[{Kerkmann and Rothe(2019)}]{kerkmann}
Kerkmann AM, Rothe J (2019) Stability in fen-hedonic games for single-player
  deviations. In: Proceedings of the 18th International Conference on
  Autonomous Agents and MultiAgent Systems (AAMAS), pp 891--899

\bibitem[{Kiayias and Lazos(2022)}]{kiayias2022sok}
Kiayias A, Lazos P (2022) Sok: Blockchain governance. In: Proceedings of the
  4th {ACM} Conference on Advances in Financial Technologies (AFT), pp 61--73

\bibitem[{Lackner(2014)}]{lackner}
Lackner M (2014) Incomplete preferences in single-peaked electorates. In:
  Proceedings of the 28th AAAI Conference on Artificial Intelligence (AAAI), pp
  742--748

\bibitem[{Lang(2020)}]{lang}
Lang J (2020) Collective decision making under incomplete knowledge: possible
  and necessary solutions. In: Proceedings of the 29th International Joint
  Conference on Artificial Intelligence (IJCAI), pp 4885--4891

\bibitem[{LeGrand et~al(2007)LeGrand, Markakis, and Mehta}]{legrand2007some}
LeGrand R, Markakis E, Mehta A (2007) Some results on approximating the minimax
  solution in approval voting. In: Proceedings of the 6th International
  Conference on Autonomous Agents and MultiAgent Systems (AAMAS), pp 1185--1187

\bibitem[{Meir et~al(2021)Meir, Sandomirskiy, and Tennenholtz}]{reshef}
Meir R, Sandomirskiy F, Tennenholtz M (2021) Representative committees of
  peers. Journal of Artificial Intelligence Research 71:401--429

\bibitem[{Paulin(2020)}]{paulin}
Paulin A (2020) An overview of ten years of liquid democracy research. In: The
  21st Annual International Conference on Digital Government Research, pp
  116--121

\bibitem[{Pivato and Soh(2020)}]{pivato2020weighted}
Pivato M, Soh A (2020) Weighted representative democracy. Journal of
  Mathematical Economics 88:52--63

\bibitem[{Revel et~al(2022)Revel, Halpern, Berinsky, and Jadbabaie}]{epistemic}
Revel M, Halpern D, Berinsky A, et~al (2022) Liquid democracy in practice: An
  empirical analysis of its epistemic performance. In: {ACM} Conference on
  Equity and Access in Algorithms, Mechanisms, and Optimization

\bibitem[{Terzopoulou(2023)}]{terzop2}
Terzopoulou Z (2023) Voting with limited energy: A study of plurality and
  {B}orda. In: Proceedings of the 22nd International Conference on Autonomous
  Agents and MultiAgent Systems (AAMAS), pp 2085--2093

\bibitem[{Terzopoulou et~al(2021)Terzopoulou, Karpov, and Obraztsova}]{terzop1}
Terzopoulou Z, Karpov A, Obraztsova S (2021) Restricted domains of dichotomous
  preferences with possibly incomplete information. In: Proceedings of the 35th
  AAAI Conference on Artificial Intelligence (AAAI), pp 5726--5733

\bibitem[{Zhou et~al(2019)Zhou, Yang, and Guo}]{zhou}
Zhou A, Yang Y, Guo J (2019) Parameterized complexity of committee elections
  with dichotomous and trichotomous votes. In: Proceedings of the 18th
  International Conference on Autonomous Agents and MultiAgent Systems (AAMAS),
  pp 503--510

\end{thebibliography}

\end{document}